\def\BibTeX{{\rm B\kern-.05em{\sc i\kern-.025em b}\kern-.08em
    T\kern-.1667em\lower.7ex\hbox{E}\kern-.125emX}}
\newtheorem{lemma}{Lemma}
\newtheorem{theorem}{Theorem}
\newtheorem{definition}{Definition}
\newcommand*\patchAMSlineno[1]{
	\expandafter\let\csname old#1\expandafter\endcsname\csname #1\endcsname
	\expandafter\let\csname oldend#1\expandafter\endcsname\csname end#1\endcsname
	\renewenvironment{#1}
		{\linenomath\csname old#1\endcsname}
		{\csname oldend#1\endcsname\endlinenomath}
}
\newcommand{\dicke}[2]{\ket{\smash{D_{#2}^{#1}}}}
\newcommand{\SCS}{\mathit{SCS}}
\newcommand{\dsu}[2]{\mathit{U}_{#2}^{#1}}
\newcommand{\dsugate}[2]{\gate[#1]{\dsu{#1}{#2}}}
\newcommand{\wdb}[3]{\mathit{WDB}_{#3}^{#1,#2}}
\newcommand{\wdbgate}[3]{\gate[#1]{\wdb{#1}{#2}{#3}}}
\newcommand{\ryangle}[2]{\underset{\scriptstyle\smash{\surd#1/#2}}{\theta}}
\newcommand{\rygate}[2]{\gate{\ryangle{#1}{#2}}}
\newcommand{\bigO}{\mathcal{O}}
\DeclareMathOperator{\plog}{poly\,log}
\newcommand{\Id}{\textit{Id}}
\newcommand{\CNOT}{\textit{CNOT}}
\begin{document}

\title{Short-Depth Circuits for Dicke State Preparation%
	\thanks{Research presented in this article was supported by the Laboratory Directed Research and Development program of Los Alamos National Laboratory under project number 20220656ER.\hfill LA-UR-22-25501}
}
\author{
	\IEEEauthorblockN{Andreas Bärtschi}
	\IEEEauthorblockA{\textit{CCS-3 Information Sciences} \\
	\textit{Los Alamos National Laboratory}\\
	Los Alamos, NM 87544, USA \\
	baertschi@lanl.gov}
\and
    \IEEEauthorblockN{Stephan Eidenbenz}
	\IEEEauthorblockA{\textit{CCS-3 Information Sciences} \\
	\textit{Los Alamos National Laboratory}\\
	Los Alamos, NM 87544, USA \\
	eidenben@lanl.gov}
}

\maketitle

%\linenumbers

\begin{abstract}
	We present short-depth quantum circuits to deterministically prepare any Dicke state $\dicke{n}{k}$, which is the equal-amplitude superposition of all $n$-qubit computational basis states with Hamming Weight $k$.
	Dicke states are an important class of entangled quantum states with a large variety of applications, and a long history of experimental creation in physical systems.
	On the other hand, not much is known regarding efficient scalable quantum circuits for Dicke state preparation on realistic quantum computing hardware connectivities. 

	Concretely, we present preparation circuits for Dicke states $\dicke{n}{k}$ with 
	(i) depth of $\bigO(k \log \tfrac{n}{k})$ for all-to-all connectivity (such as on current ion trap devices);
	(ii) depth of $\bigO(k \surd \tfrac{n}{k})=\bigO(\sqrt{nk})$ for Grid connectivity on grids of size $\Omega(\surd \tfrac{n}{s}) \times \bigO( \sqrt{ns})$ with $s \leq k$ (such as on most current superconducting qubit devices). 
	
	Both approaches have a total gate count of $\bigO(kn)$, need no ancilla qubits, 
	and generalize to preparation and compression of those symmetric pure states, in which all non-zero amplitudes correspond to states with Hamming weight at most $k$.
	Our work significantly improves and expands previous state-of-the art circuits which had depth $\bigO(n)$ on a Linear Nearest Neighbor connectivity for arbitrary $k$ (FCT 2019~\cite{baertschi2019deterministic}) % (Fundamentals of Computation Theory 2019) 
	and depth $\bigO(\log n)$ on all-to-all connectivity for $k=1$ (AQT 2019~\cite{epfl2019wstate}). %(Advanced Quantum Technologies 2019).
\end{abstract}

\begin{IEEEkeywords}
Dicke states,
state preparation,
deterministic,
circuit,
hardware connectivity
\end{IEEEkeywords}

\section{Introduction}
\label{sec:intro}

Dicke states are among the most useful highly-entangled quantum states. A Dicke state \dicke{n}{k} is defined as the equal-amplitude superposition of all $n$-qubit computational basis states $x$ of Hamming weight $\mathrm{HW}(x)=k$, 
where the Hamming weight $\mathrm{HW}(x)$ denotes the number of 1s in the bitstring~$x$, 
\begin{align*}
	\label{eq:dicke}
	\dicke{n}{k} = \tbinom{n}{k}^{-\frac{1}{2}} \sum\nolimits_{x \in \left\{ 0,1 \right\}^n,\ \mathrm{HW}(x)=k}{\ket{x}}.
\end{align*}
For example, $\dicke{4}{2} = \tfrac{1}{\surd{6}}(\ket{1100}+\ket{1010}+\ket{1001}+\ket{0110}+\ket{0101}+\ket{0011}).$
We note that by flipping every bit, we have $\dicke{n}{n-k} = X^{\otimes n}\dicke{n}{k}$, thus results for $\dicke{n}{k}$ also hold for $\dicke{n}{n-k}$
and we may assume $2k \leq n$ without loss of generality.
Dicke states~\cite{Dicke1954} have been studied in quantum game theory~\cite{Oezdemir2007}, quantum networking~\cite{Prevedel2009}, quantum metrology~\cite{Toth2012,ouyang2021robust}, 
quantum error correction~\cite{ouyang2014permutation,ouyang2021permutation} and quantum storage~\cite{ouyang2021quantum}. 
From a quantum algorithms perspective, their high-impact application is their use as initial superpositions of all feasible solution states of Hamming-weight constrained 
combinatorial optimization problems such as Maximum k-Vertex Cover. Indeed such use cases have been explored for adiabatic computing~\cite{Childs2002} and for variational quantum optimization, 
in particular the Quantum Alternating Operator Ansatz (QAOA) \cite{Hadfield2019,nasa2020XY,cook2020kVC,baertschi2020grover,golden2021grover,golden2022evidence}.
Due to their wide applicability, various smaller Dicke states have been implemented on most NISQ technologies such as trapped ions~\cite{Hume2009,Ivanov2013,Lamata2013}, atoms~\cite{Stockton2004,Xiao2007,Shao2010},
photons~\cite{Prevedel2009,Wieczorek2009}, superconducting qubits~\cite{Wu2016,aktar2022divideconquer}, and others~\cite{johnsson2020geometric,wu2019initializing}. Finally, Dicke states are a useful entanglement 
benchmark test (as alternative or in addition to GHZ states) for novel quantum devices~\cite{somma2006fidelity}.

\subsection*{Results and Relevance} In this paper, we propose scalable short-depth quantum circuits to deterministically prepare Dicke states $\dicke{n}{k}$ on gate-based quantum computers. 
More precisely, we first present preparation circuits for Dicke states $\dicke{n}{k}$ with a depth of $\bigO(k \log \tfrac{n}{k})$ if the underlying qubit topology is an all-to-all connectivity. 
Such connectivity is characteristic of current ion trap devices, such as those of Honeywell/Quantinuum or IonQ. 

A second set of circuits achieves a depth of $\bigO(k \surd \tfrac{n}{k}) = \bigO(\sqrt{nk})$ for grid connectivity, where we prefer the former notation to illustrate the constructive similarity to the first result.
Our grid connectivity result holds for any ``grid-like'' topology of size $\Omega(\surd \tfrac{n}{s}) \times \bigO( \sqrt{ns})$ with $1\leq s \leq k$, which is typical for most current superconducting qubit devices, such as those of IBMQ, Google and Rigetti.
We call a topology (e.g., a heavy-hex lattice) grid-like, if there is a canonical mapping from a grid to the topology with at most a constant-factor overhead 
in both depth and 2-qubit gate count when adapting circuits along. 
These topologies are of long-term interest as error-correction codes such as the honeycomb code~\cite{Hastings2021dynamically,Haah2022boundarieshoneycomb},
the surface code~\cite{dennis2002topological,fowler2012surface} and hybrids~\cite{Chamberland2020codes} with Bacon-Shor~\cite{bacon2006code,shor1995code} assume such connectivities. 
For simplicity, in our proofs we assume Cartesian grid topologies.

Both of our approaches have a total gate count of $\bigO(kn)$, need no ancilla qubits, 
and generalize to preparation and compression of those symmetric pure states, in which all non-zero amplitudes correspond to states with Hamming weight at most $k$.
The key ingredient in our circuit design is the classical algorithm design paradigm of recursive divide-and-conquer.
Based on (perhaps misleading) insights from classical lower bounds, we conjecture our circuit depths are optimal up to constant factors and low-order terms 
for constant $k$ on all-to-all, and for arbitrary $k$ on $\surd{\tfrac{n}{k}} \times \sqrt{nk}$ grid topologies.

\begin{table*}[t] 
    \centering
	\begin{tabular}{@{}llllllr@{}}
		\toprule State Preparation
		& Reference							& State Type			& Circuit Depth				& \# of {\CNOT} Gates			& \# of Ancilla			& Topology		\\
		\toprule Probabilistic                                                                                                                                  	                                       
		& Childs \textit{et.al.} 2000~\cite{Childs2002}			& Dicke states			& $\bigO(n\ \plog n)$			& $\bigO(n\ \plog n)$			& $\bigO(\log n)$		& LNN			\\	
		\midrule Deterministic                                                                                                                                  	                                       
		& Cruz \textit{et.al.} 2018~\cite{epfl2019wstate}		& W states			& $\bigO(\log n)$ 			& $\bigO(n)$				& --				& all-to-all		\\
		& Kaye, Mosca 2004~\cite{Mosca2001}				& Symmetric states		& $\bigO(n\ \plog(n/\varepsilon))$	& $\bigO(n\ \plog(n/\varepsilon))$	& $\bigO(\log(n/\varepsilon))$	& LNN			\\			
		& Bärtschi, Eidenbenz 2019~\cite{baertschi2019deterministic}	& Dicke states			& $\bigO(n)$				& $\bigO(kn)$				& --				& LNN			\\
		& 								& Symmetric states		& $\bigO(n)$				& $\bigO(n^2)$				& --				& LNN			\\
		                                                                                                                                        
		& \textbf{This work}						& \textbf{Dicke states}		& \boldmath$\bigO(k \log \tfrac{n}{k})$	& \boldmath$\bigO(kn)$			& --				& \textbf{all-to-all}	\\
		&								& \textbf{Dicke states}		& \boldmath$\bigO(k \surd\tfrac{n}{k})$	& \boldmath$\bigO(kn)$			& --				& \textbf{Grid}		\\
		\midrule (Un)Compression                                                                                                                                  	                                       
		& Bacon, Harrow, Chuang 2004~\cite{Bacon2006}			& Schur Transform		& $\bigO(n\ \plog(n/\varepsilon))$	& $\bigO(n\ \plog(n/\varepsilon))$	& $\bigO(\log(n/\varepsilon))$	& LNN			\\
		& Plesch, Bu\v{z}ek 2009~\cite{Plesch2010}			& Symmetric states		& $\bigO(n^2)$				& $\bigO(n^2)$				& --				& LNN			\\	
		\midrule Sparse States
		& Zhang, Li, Yuan 2022~\cite{zhang2022quantum}			& $d$-sparse			& $\bigO(\log(dn))$ 			& $\bigO(dn \log (dn))$			& $\bigO(dn \log d)$		& all-to-all		\\
		\bottomrule
	\end{tabular}
	\caption{%
		State preparation schemes for Dicke states as well as subsets and supersets thereof: W states are Dicke states of Hamming weight 1, while Symmetric states are superpositions of Dicke states. 
		Probabilistic state preparation uses a projective measurement of a $n$-qubit product state into a Hamming weight subspace.
		Quantum compression is more general and is used in reverse for state preparation. Here $\varepsilon$ denotes the error coming from the precision of reversible floating-point arithmetic circuits.
		Sparse state preparation is also more general, but applied to Dicke states which are $d=\binom{n}{k}$-sparse, we have $\log(d) \geq k\log\tfrac{n}{k}$, 
		giving the same circuit depth as our first result, but with a superpolynomial number of gates and ancilla.
	}
	\label{tbl:asymptotics}
\end{table*}

\subsection*{Paper Organization}  Our Dicke state preparation circuits of depth $\bigO(k \log \tfrac{n}{k})$ and $\bigO(k \surd\tfrac{n}{k})$ for all-to-all and grid topologies, respectively, 
significantly improve upon and extend previous state-of-the art circuits that had 
depth $\bigO(n)$ on a Linear Nearest Neighbor connectivity for arbitrary~$k$~\cite{baertschi2019deterministic} and 
depth $\bigO(\log n)$ on all-to-all connectivity for $k=1$ \cite{epfl2019wstate}.

We give a detailed overview over other relevant results in Section~\ref{sec:related}. 
In Section~\ref{sec:preliminaries}, we establish necessary preliminaries and present the main recursive divide-and-conquer idea, including the ``conquer'' part of the scheme.
The ``divide'' part, which requires only LNN connectivity -- and thus a subgraph of all-to-all and grid topologies -- is presented in Section~\ref{sec:divide}.
The topology-dependent recursive structure is presented for both all-to-all and grid topologies in Section~\ref{sec:recursive}.
We finish with a conclusion, Section~\ref{sec:conclusion}.

\section{Related Work}
\label{sec:related}

Table~\ref{tbl:asymptotics} gives an overview over relevant results and a comparison to our new results. We need to introduce related states: $W$ states are Dicke states of Hamming weight 1, $\dicke{n}{1}$,
and Symmetric states are superpositions of Dicke states. They can thus be seen as a subset (superset, respectively) of Dicke states. Furthermore, Dicke states have $d=\binom{n}{k}$ non-zero state vector entries. 
Thus we can consider them to be $d$-sparse, though for non-constant $k$ the sparsity scales superpolynomially.
There are four basic approaches to state preparation of these states: probabilistic, deterministic, through quantum (un-)compression via the Schur Transform, and through sparse state preparation techniques.

Probabilistic state preparation uses a projective measurement of an $n$-qubit product state into a Hamming-weight subspace as proposed by \cite{Childs2002} which succeeds with probability 
$\tbinom{n}{k}(\tfrac{k}{n})^k (\tfrac{n-k}{n})^{n-k} \in \Omega(\tfrac{1}{\surd n})$, resulting in a $\bigO(n\ \plog n)$ circuit depth 
with $\bigO(n\ \plog n)$ CNOT gates and requiring the use of $\bigO(\log n)$ ancilla qubits on a Linear Nearest Neighbor (LNN) qubit topology. 

Dicke states can also be generated by using the (inverse) Schur Transform, which implements the Schur-Weyl duality between the computational basis and the Schur basis 
with a circuit depth of  $\bigO(n\ \plog(n/\varepsilon))$ and $\bigO(\log(n/\varepsilon))$ ancilla qubits on a LNN topology, where $\varepsilon$ denotes the error coming from the precision of floating-point arithmetic subcircuits~\cite{Bacon2006}. 
For qubits, the Schur basis gives the total angular momentum $J$ and its $z$-component $m$ of a state, plus an additional index to distinguish between states of the same $(J,m)$ tuple.
In case of Dicke states~$\dicke{n}{k}$, we have unique values $J=n$ and $m=n-2k$~\cite{somma2006fidelity}. Thus the Schur transform can be used for quantum compression of Symmetric states,
for which a construction was given with quadratic depth and notably without ancilla qubits~\cite{Plesch2010}, by removing the need to store $J$ and index values.

In deterministic state preparation, where a circuit deterministically prepares a Symmetric state, an initial proposal~\cite{Mosca2001} builds circuits of depth $\bigO(n\ \plog(n/\varepsilon))$ 
using $\bigO(\log(n/\varepsilon))$ ancilla qubits on an LNN topology, matching the result later obtained with the Schur transform. 
An improvement~\cite{baertschi2019deterministic} leads to state preparation circuits of linear depth without the need for ancilla qubits,
requiring $\bigO(n^2)$ CNOT gates for Symmetric states and $\bigO(kn)$ CNOT gates for Dicke states on LNN connectivities.
For $W$ states, a logarithmic $\bigO(\log n)$ construction exists for all-to-all connectivities~\cite{epfl2019wstate}. This scheme can easily be extended to $\bigO(\surd n)$ depth for grid connectivities.
Several approaches aimed at NISQ implementation efficiency (for small $n$ and $k$) feature improvements that are not reflected in asymptotics \cite{mukherjee2020preparing,aktar2022divideconquer}. 

For arbitrary $d$-sparse states, state preparation schemes were recently presented~\cite{zhang2022quantum} requiring only the optimum depth of $\bigO(\log(dn))$, 
albeit at a high number of $\bigO(dn \log d)$ ancilla qubits. Applied to Dicke states which are $d=\binom{n}{k}$-sparse, 
we have depth $\log(dn) \geq \log(d) \geq k\log\tfrac{n}{k}$ using in total a superpolynomial number of gates and ancilla.

Our results are the first Dicke state preparation circuits that are sublinear in depth with respect to $n$,
without the need for superpolynomial gate or ancilla resources, see Table~\ref{tbl:asymptotics}.

\begin{figure*}[t!]
	\centering
	\begin{adjustbox}{width=\linewidth}
	%% Grover Mixer QAOA
	%% -----------------
	%\hspace*{-0.5cm}
		\begin{quantikz}[row sep={24pt,between origins},execute at end picture={}]
			\lstick{\ket{0}}	& \rygate{1}{20}	& \ctrl{1}		& \qw\slice{}		& \qw		& \qw		& \octrl{5}\slice{}	& \dsugate{3}{3}	& \qw\rstick[6]{\rotatebox{90}{$\dicke{6}{3}$}} & 	& 	& 	& 	& 	& \lstick{$\ket{0^{1-\ell}1^{\ell}}$}	& \ctrl{4}	& \targ{}\slice{}	& \ctrl{2}	& \targ{}	& \ctrl{1}	& \targ{}\slice{}	& \dsugate{1}{1}	& \qw\rstick[6]{\rotatebox{90}{$\dicke{6}{\ell}$}} 	\\ 
			\lstick{\ket{0}}	& \qw			& \rygate{9}{19}	& \ctrl{1}		& \qw		& \octrl{3}	& \qw			& \qw			& \qw					 	& 	& 	& 	& 	& 	& \lstick{\ket{0}}			& \qw		& \qw			& \qw		& \qw		& \rygate{1}{2}	& \ctrl{-1}		& \dsugate{1}{1}	& \qw					 		\\
			\lstick{\ket{0}}	& \qw			& \qw			& \rygate{9}{10}	& \octrl{1}	& \qw		& \qw			& \qw			& \qw 						& 	& 	& 	& 	& 	& \lstick{\ket{0}}			& \qw		& \qw			& \rygate{2}{4}	& \ctrl{-2}	& \ctrl{1}	& \targ{}		& \dsugate{1}{1}	& \qw 							\\
			\lstick{\ket{0}}	& \qw			& \qw			& \qw			& \targ{}	& \qw		& \qw			& \dsugate{3}{3}	& \qw 						& 	& 	& 	& 	& 	& \lstick{\ket{0}}			& \qw		& \qw			& \qw		& \qw		& \rygate{1}{2}	& \ctrl{-1}		& \dsugate{1}{1}	& \qw 							\\
			\lstick{\ket{0}}	& \qw			& \qw			& \qw			& \qw		& \targ{}	& \qw			& \qw			& \qw 						& 	& 	& 	& 	& 	& \lstick{\ket{0}}			& \rygate{4}{6}	& \ctrl{-4}		& \ctrl{1}	& \targ{}	& \qw		& \qw			& \dsugate{1}{1}	& \qw 							\\
			\lstick{\ket{0}}	& \qw			& \qw			& \qw			& \qw		& \qw		& \targ{}		& \qw			& \qw 						& 	& 	& 	& 	& 	& \lstick{\ket{0}}			& \qw		& \qw			& \rygate{1}{2}	& \ctrl{-1}	& \qw		& \qw			& \dsugate{1}{1}	& \qw 							
		\end{quantikz}
	\end{adjustbox}
	\caption{Previous approaches to $\dicke{n}{k}$ preparation using parallel Dicke state unitaries $\dsu{k}{k}$:
		We use big-endian notation (top-to-bottom wires are right-to-left bitstrings) and shorthand gate notation 
		$\theta_{\surd x/y} := R_y(2\cos^{-1}(\sqrt{x/y}))\colon \ket{0} \mapsto \surd\tfrac{x}{y}\ket{0} + \surd\tfrac{y-x}{y}\ket{1}$.\newline
		(LEFT) Following~\cite{aktar2022divideconquer}, we prepare the superposition $\surd\tfrac{1}{20}\left(\surd1\ket{000} + \surd9\ket{001} + \surd9\ket{011} +\surd1\ket{111}\right)$
		on the least-significant qubits. Note that the numerators $1,9,9$ and the suffix sums $20,19,10$ appear as terms in the angle arguments. 
		Next, we add the missing Hamming weight to the other qubits, creating 
		$\smash{\tfrac{1}{\surd 20} \sum_{\ell=0}^3 \surd{\tbinom{3}{3-\ell}}\surd{\tbinom{3}{\ell}} \ket{0^{\ell}1^{3-\ell}}\ket{0^{3-\ell}1^{\ell}}}$.
		This initial state preparation is static, using a fixed Hamming weight $k=3$.
		Finally, we symmetrize the two qubit triples using Dicke state unitaries $\dsu{3}{3}$.	\newline
		(RIGHT) Following~\cite{epfl2019wstate}, we distribute an input Hamming weight of $0 \leq \ell \leq 1$ between the first four and the last two qubits,
		yielding $\smash{\surd\tfrac{2}{6}\ket{0(0^{1-\ell}1^{\ell})0000} + \surd\tfrac{4}{6}\ket{00000(0^{1-\ell}1^{\ell})}}$. Next, we recursively 
		apply this construction to further distribute the Hamming weight between all six qubits, yielding $\dicke{6}{\ell}$. 
		Crucially, the construction works for both an input of $\ket{000000}$ and of $\ket{000001}$, corresponding to $\ell=0$ and $\ell=1$, respectively.
		The Dicke state unitaries $\dsu{1}{1}$ are simply identities.
	}
	\label{fig:previous-methods}
\end{figure*}
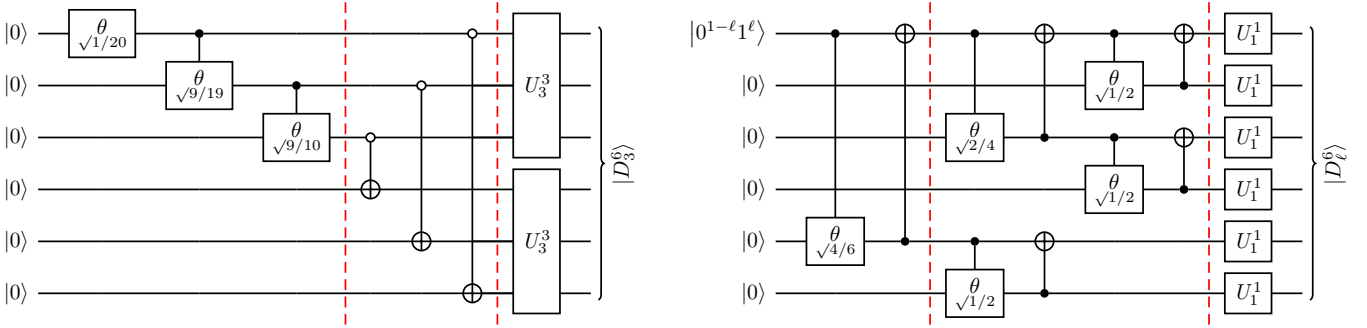

\section{Preliminaries and Main Idea}
\label{sec:preliminaries}

In order to prepare Dicke states, we reuse the concept of Dicke state unitaries~$\dsu{n}{k}$, which, given as input a classical state $\ket{0^{n-\ell}1^{\ell}}$ with $\ell \leq k$,
prepare the Dicke state $\dicke{n}{\ell}$. Note that the classical state $\ket{0^{n-\ell}1^{\ell}}$ is a 0-padded unary encoding of the Hamming weight $\ell$ of the Dicke state 
to be generated. We choose a big-endian bitstring representation (the most significant bit -- the ``big end'' -- on the left), where a top-to-bottom ordering of wires in circuit diagrams 
corresponds to right-to-left orderings in the ket notation:

\begin{definition}[adapted from~\cite{baertschi2019deterministic}]
	Denote by $\dsu{n}{k}$ any unitary satisfying for all $\ell \leq k\colon \dsu{n}{k} \ket{0^{n-\ell}1^{\ell}} = \dicke{n}{\ell}$.%
	\label{def:dsu}
\end{definition}

The use of such an input-dependent state preparation unitary has several benefits, with little overhead extending Dicke state preparation to Symmetric state preparation,
and -- using the inverse unitary $\dsu{n\dagger}{k}$ -- to Symmetric state compression:
\begin{itemize}
	\item	Symmetric states $\ket{\Psi}$ with non-zero amplitudes exclusively on computational basis states of Hamming weight $\leq k$ can be written in the Dicke state basis,
		$\sum_{\ell=0}^k \alpha_\ell \dicke{n}{\ell}$. 
		Thus they can be prepared by feeding the input $\sum_{\ell=0}^k \alpha_\ell \ket{0^{n-\ell}1^{\ell}}$ into $\dsu{n}{k}$. This input itself can be prepared in
		depth $\bigO(k)$, even on LNN connectivity~\cite{baertschi2019deterministic}.
	\item	An existing Symmetric state $\ket{\Psi}$ of this form can also be compressed into $\lceil \log(k+1) \rceil$ qubits, using the inverse unitary $\dsu{n\dagger}{k}$ to 
		get $\sum_{\ell=0}^k \alpha_\ell \ket{0^{n-\ell}1^{\ell}}$, followed by conversions into
		the one-hot-encoding $\sum_{\ell=0}^k \alpha_\ell \ket{0^{n-\ell}1^10^{\ell-1}}$ and then into
		a 0-padded logarithmic encoding $\sum_{\ell=0}^k \alpha_\ell \ket{0\ldots0}\ket{\ell}$~with only $\lceil \log(k+1) \rceil$ non-zero qubits~\cite{Plesch2010}. 
		These encoding transformations can be implemented in depth $\bigO(k \log k)$ for all-to-all connectivities and $\bigO(k\, \plog k)$ for LNN connectivities~\cite{baertschi2019deterministic}.
\end{itemize}

This allows us to exclusively focus on efficient implementations of the Dicke state unitary $\dsu{n}{k}$ for specific topologies. 
For LNN connectivity, we recall the following state-of-the art linear-depth result:

\begin{lemma}[paraphrased from~\cite{baertschi2019deterministic}]
	The Dicke state unitary $\dsu{n}{k}$ can be implemented with a circuit of depth $\bigO(n)$ on Linear Nearest Neighbor connectivities, 
	with $\bigO(k n)$ 2-qubit gates in total and without ancilla qubits.
	\label{lem:dsu-LNN}
\end{lemma}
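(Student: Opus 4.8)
\emph{Proof plan.} The plan is to build $\dsu{n}{k}$ by a divide-and-conquer recursion that peels off a single qubit per step. The starting point is the elementary recurrence, valid for all $0\le\ell\le n$,
\begin{equation*}
	\dicke{n}{\ell} = \sqrt{\tfrac{n-\ell}{n}}\;\dicke{n-1}{\ell}\otimes\ket{0} \;+\; \sqrt{\tfrac{\ell}{n}}\;\dicke{n-1}{\ell-1}\otimes\ket{1},
\end{equation*}
which splits a Dicke state according to the value of its least significant (bottom) qubit. Reading this backwards suggests writing $\dsu{n}{k} = \bigl(\dsu{n-1}{k}\otimes\Id\bigr)\,\SCS_{n,k}$, where the building block $\SCS_{n,k}$ must realize, on the relevant unary inputs $\ket{0^{n-\ell}1^{\ell}}$ with $\ell\le k$, the ``split and cyclic shift''
\begin{equation*}
	\SCS_{n,k}\,\ket{0^{n-\ell}1^{\ell}} = \sqrt{\tfrac{\ell}{n}}\,\ket{0^{n-\ell}1^{\ell}} + \sqrt{\tfrac{n-\ell}{n}}\,\ket{0^{n-1-\ell}1^{\ell}0}.
\end{equation*}
Feeding the two output branches into $\dsu{n-1}{k}$ (acting on the top $n-1$ wires) and invoking Definition~\ref{def:dsu} reproduces the recurrence exactly, so the whole construction is automatically a genuine unitary (a product of unitaries) whose restriction to the constrained inputs is correct; correctness then follows by induction on $n$, with base case $\dsu{1}{1}=\Id$ (and, more generally, the terminal block $\dsu{k}{k}$, handled by the same recursion).

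The crux of the \emph{local} implementation is that the action of $\SCS_{n,k}$ is concentrated near the boundary between the $0$-block and the $1$-block of its input. Since $\ell$ ranges only up to $k$, this boundary always lies within the bottom window of $k+1$ adjacent qubits, and outside this window $\SCS_{n,k}$ acts as the identity. I would implement the in-window part as a single staircase cascade of one two-qubit controlled rotation followed by $k-1$ three-qubit (doubly controlled rotation) gates between nearest neighbors, each realizable in $\bigO(1)$ depth and $\bigO(1)$ {\CNOT}s on LNN. This yields $\SCS_{n,k}$ in depth $\bigO(k)$ with $\bigO(k)$ two-qubit gates and no ancillas. One then verifies by direct computation on each unary input that the cascade transports the single boundary $0$ down to the least significant position while imprinting the amplitudes $\sqrt{\ell/n}$ and $\sqrt{(n-\ell)/n}$.

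Unrolling the recursion expresses $\dsu{n}{k}$ as a product of blocks $\SCS_{m,k}$ for $m=n,n-1,\dots$ down to the base case, whose active windows slide upward by one qubit at each step. The total gate count is thus $\bigO(n)$ blocks times $\bigO(k)$ gates each, i.e.\ $\bigO(kn)$ two-qubit gates and no ancillas, as claimed. The only nontrivial point is the depth: executed block by block the circuit has depth $\bigO(kn)$, and the work is to show it compresses to $\bigO(n)$. The main obstacle is therefore a scheduling/pipelining argument: because consecutive blocks overlap in all but one qubit and their internal cascades run in the same direction, the staircases can be staggered by a constant number of layers so that on every shared wire the gate of block $m$ still precedes that of block $m-1$. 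I would formalize this by assigning each elementary gate a time slot determined by its qubit position and block index, then checking that (i) gates sharing a qubit remain consistently ordered and (ii) the largest slot used is $\bigO(n+k)=\bigO(n)$ --- precisely a systolic/wavefront packing of the parallelogram-shaped cascades into a strip of linear height. Combining correctness, the gate count, and this depth bound yields the lemma.
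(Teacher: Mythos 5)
Your proposal is correct and takes essentially the same route as the paper's source for this lemma: the paper imports the result from~\cite{baertschi2019deterministic} and itself summarizes that construction as a recursion of split-\&-cyclic-shift gates (the weight distribution blocks $\wdb{n}{1}{k}$), each a stair-shaped cascade of $\bigO(k)$ local gates, whose staircases are pipelined so that the naive $\bigO(kn)$ depth compresses to $\bigO(n)$. Your recurrence for $\dicke{n}{\ell}$, your $\SCS_{n,k}$ action, and your wavefront scheduling argument reproduce exactly that scheme.
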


\subsection{Main Idea} 
Informally speaking, we will use the circuit constructions behind Lemma~\ref{lem:dsu-LNN} 
by invoking $\bigO(\tfrac{n}{k})$ many parallelly executed instances of smaller Dicke state unitaries $\dsu{k}{k}$, each distributing some unary-encoded input Hamming weight $\ell \leq k$ across its $k$ qubits. 
This still needs $\bigO(\tfrac{n}{k} \cdot k^2) = \bigO(kn)$ gates in total, but at a depth of only $\bigO(k)$.

So what's the catch? The challenge is to provide a correctly weighted superposition of these inputs:
\begin{itemize}
	\item	The input Hamming weights $\ell_i$ fed to the different Dicke state unitaries $\dsu{k}{k}$ need to sum up to $k$, $\sum_{i=0}^{n/k} \ell_i \stackrel{!}{=} k$. 			
	\item	The amplitude of a specific Hamming weight combination input $(\ell_i)_i$ needs to reflect the number of computational basis states in the Dicke state resulting from $(\ell_i)_i$.
\end{itemize}

For illustration purposes, we give a concrete example, see also Figure~\ref{fig:previous-methods}~(left): The Dicke state $\dicke{6}{3}$ is a superposition of $\tbinom{6}{3}$ computational basis states of length $2\cdot 3$ with Hamming weight $3$. 
Note that there is only $1 = \tbinom{3}{0}\tbinom{3}{3}$ basis state that has Hamming weight 3 in the least-significant three bits: $\ket{000111}$.
However, there are $9 = \tbinom{3}{1}\tbinom{3}{2}$ basis states that have Hamming weight 2 in the least-significant three bits.
The same observations hold for Hamming weights 0 and 1, respectively.

Hence if we want to prepare $\dicke{6}{3}$ with a parallel execution of two Dicke state unitaries $\dsu{3}{3}$ on the three least- and the three most-significant qubits, 
we first have to prepare an input state that correctly distributes the Hamming weight 3 across these two sets of qubits:
\[	\dicke{6}{3} =	\left(\dsu{3}{3}\otimes \dsu{3}{3}\right) \frac{1}{\surd\tbinom{6}{3}} \sum_{\ell=0}^3 \sqrt{\tbinom{3}{3-\ell}\tbinom{3}{\ell}} \ket{0^{\ell}1^{3-\ell}}\ket{0^{3-\ell}1^{\ell}}.	\]

Indeed, preparing a Dicke state $\dicke{n}{k}$ using smaller Dicke state unitaries $\dsu{\lceil n/2 \rceil}{k}$ and $\dsu{\lfloor n/2 \rfloor}{k}$
has improved the fidelity of Dicke states prepared on IBMQ~\cite{aktar2022divideconquer}, as the resulting circuits lower the gate count.

\subsection{Hamming Weight Distribution}
In hindsight, previous methods can be discussed in terms of Hamming weight distribution.  
The approach above of splitting a fixed (non-input) Hamming weight $k$ across two sets of qubits does not lend itself to a recursive application, 
see Figure~\ref{fig:previous-methods}~(left) or an interactive Quirk~\cite{quirk} circuit for the 
\href{https://algassert.com/quirk#circuit=%7B%22cols%22%3A%5B%5B%7B%22id%22%3A%22Ryft%22%2C%22arg%22%3A%222%20acos(sqrt(1%2F20))%22%7D%5D%2C%5B%5D%2C%5B%5D%2C%5B%5D%2C%5B%22%E2%80%A2%22%2C%7B%22id%22%3A%22Ryft%22%2C%22arg%22%3A%222%20acos(sqrt(9%2F19))%22%7D%5D%2C%5B%5D%2C%5B%5D%2C%5B%5D%2C%5B1%2C%22%E2%80%A2%22%2C%7B%22id%22%3A%22Ryft%22%2C%22arg%22%3A%222%20acos(sqrt(9%2F10))%22%7D%5D%2C%5B%5D%2C%5B%5D%2C%5B%5D%2C%5B1%2C1%2C%22%E2%97%A6%22%2C%22X%22%5D%2C%5B1%2C%22%E2%97%A6%22%2C1%2C1%2C%22X%22%5D%2C%5B%22%E2%97%A6%22%2C1%2C1%2C1%2C1%2C%22X%22%5D%2C%5B%22~kfu0%22%2C1%2C1%2C%22~kfu0%22%5D%5D%2C%22gates%22%3A%5B%7B%22id%22%3A%22~kfu0%22%2C%22name%22%3A%22U3%2C3%22%2C%22circuit%22%3A%7B%22cols%22%3A%5B%5B1%2C%7B%22id%22%3A%22Ryft%22%2C%22arg%22%3A%22pi%2F2%22%7D%5D%2C%5B%22X%22%2C%22%E2%80%A2%22%5D%2C%5B%7B%22id%22%3A%22Ryft%22%2C%22arg%22%3A%22acos(sqrt(2%2F3))%22%7D%2C%7B%22id%22%3A%22Ryft%22%2C%22arg%22%3A%22acos(sqrt(2%2F3))%22%7D%5D%2C%5B%5D%2C%5B%5D%2C%5B%5D%2C%5B%22X%22%2C%22%E2%80%A2%22%5D%2C%5B1%2C%7B%22id%22%3A%22Ryft%22%2C%22arg%22%3A%22-pi%2F2%22%7D%5D%2C%5B%5D%2C%5B1%2C%22%E2%80%A2%22%2C%22X%22%5D%2C%5B1%2C%7B%22id%22%3A%22Ryft%22%2C%22arg%22%3A%22-0.5*acos(sqrt(2%2F3))%22%7D%5D%2C%5B%5D%2C%5B%5D%2C%5B%5D%2C%5B%5D%2C%5B%22%E2%80%A2%22%2C%22X%22%5D%2C%5B1%2C%7B%22id%22%3A%22Ryft%22%2C%22arg%22%3A%220.5*acos(sqrt(2%2F3))%22%7D%5D%2C%5B%5D%2C%5B%5D%2C%5B%5D%2C%5B1%2C%22X%22%2C%22%E2%80%A2%22%5D%2C%5B1%2C%7B%22id%22%3A%22Ryft%22%2C%22arg%22%3A%22-0.5*acos(sqrt(2%2F3))%22%7D%5D%2C%5B%5D%2C%5B%5D%2C%5B%5D%2C%5B%5D%2C%5B%22%E2%80%A2%22%2C%22X%22%5D%2C%5B1%2C%7B%22id%22%3A%22Ryft%22%2C%22arg%22%3A%220.5*acos(sqrt(2%2F3))%22%7D%5D%2C%5B%5D%2C%5B%5D%2C%5B%5D%2C%5B1%2C%22%E2%80%A2%22%2C%22X%22%5D%2C%5B1%2C%7B%22id%22%3A%22Ryft%22%2C%22arg%22%3A%22pi%2F2%22%7D%5D%2C%5B%22X%22%2C%22%E2%80%A2%22%5D%2C%5B%7B%22id%22%3A%22Ryft%22%2C%22arg%22%3A%22acos(sqrt(1%2F2))%22%7D%2C%7B%22id%22%3A%22Ryft%22%2C%22arg%22%3A%22acos(sqrt(1%2F2))%22%7D%5D%2C%5B%5D%2C%5B%5D%2C%5B%5D%2C%5B%22X%22%2C%22%E2%80%A2%22%5D%2C%5B1%2C%7B%22id%22%3A%22Ryft%22%2C%22arg%22%3A%22-pi%2F2%22%7D%5D%5D%7D%7D%5D%7D
}{Dicke state \dicke{6}{3}}.
Hence there is no asymptotic improvement of the state preparation depth over a single $\dsu{n}{k}$ unitary~\cite{aktar2022divideconquer}.
On the other hand, an asymptotic improvement exists for $W$ states on all-to-all connectivities~\cite{epfl2019wstate} (note that $\dsu{1}{1} = \Id$). 
It is based on a recursive application of a subroutine which can correctly redistribute both Hamming weight 0 and Hamming weight 1, but does not generalize to higher Hamming weights,
see Figure~\ref{fig:previous-methods}~(right) or a Quirk circuit for 
\href{https://algassert.com/quirk#circuit=%7B%22cols%22%3A%5B%5B%22Counting1%22%5D%2C%5B%22Chance%22%5D%2C%5B%22%E2%80%A2%22%2C1%2C1%2C1%2C%7B%22id%22%3A%22Ryft%22%2C%22arg%22%3A%222%20acos(sqrt(4%2F6))%22%7D%5D%2C%5B%5D%2C%5B%5D%2C%5B%5D%2C%5B%22X%22%2C1%2C1%2C1%2C%22%E2%80%A2%22%5D%2C%5B%22Chance6%22%5D%2C%5B%22%E2%80%A2%22%2C1%2C%7B%22id%22%3A%22Ryft%22%2C%22arg%22%3A%222%20acos(sqrt(2%2F4))%22%7D%5D%2C%5B1%2C1%2C1%2C1%2C%22%E2%80%A2%22%2C%7B%22id%22%3A%22Ryft%22%2C%22arg%22%3A%222%20acos(sqrt(1%2F2))%22%7D%5D%2C%5B%5D%2C%5B%5D%2C%5B%22X%22%2C1%2C%22%E2%80%A2%22%5D%2C%5B1%2C1%2C1%2C1%2C%22X%22%2C%22%E2%80%A2%22%5D%2C%5B%22%E2%80%A2%22%2C%7B%22id%22%3A%22Ryft%22%2C%22arg%22%3A%222%20acos(sqrt(1%2F2))%22%7D%5D%2C%5B1%2C1%2C%22%E2%80%A2%22%2C%7B%22id%22%3A%22Ryft%22%2C%22arg%22%3A%222%20acos(sqrt(1%2F2))%22%7D%5D%2C%5B%5D%2C%5B%5D%2C%5B%5D%2C%5B%22X%22%2C%22%E2%80%A2%22%5D%2C%5B1%2C1%2C%22X%22%2C%22%E2%80%A2%22%5D%5D%7D
}{Dicke states \dicke{6}{0,1}}.

Our approach is therefore to augment both of these methods, such that we get a subroutine which can distribute any Hamming weight $0 \leq \ell \leq k$ (given as a unary encoding) on $n>k$ qubits 
into a partition of $m$ and $n-m$ qubits. Such subroutines can then recursively be called on these smaller sets of $m$ and $n-m$ qubits if $m>k$ and $n-m>k$ respectively. 
Thus they build the ``Divide'' part of a recursive divide-and-conquer approach.  
Otherwise, we are already at the last step and can distribute unary Hamming weights symmetrically across the qubits using Dicke state unitaries $\dsu{m}{m}$ and/or $\dsu{n-m}{n-m}$, respectively.
These Dicke state unitaries thus build the ``Conquer'' part of a recursive divide-and-conquer approach. 
We define a corresponding \emph{weight distribution block}:

\begin{definition}[Weight Distribution Block]
	Denote by $\wdb{n}{m}{k}$ any unitary mapping for all $\ell \leq k\colon \ket{0^{n-\ell}1^\ell}$
	\[ \mapsto \frac{1}{\surd\tbinom{n}{\ell}} \sum_{i=0}^{\ell} \sqrt{\tbinom{m}{i}\tbinom{n-m}{\ell-i}} \ket{0^{m-i}1^i}\ket{0^{n-m+i-\ell}1^{\ell-i}},	\]
	where we set $\tbinom{x}{y} = 0$ whenever $y>x$ (which can happen, e.g., if $m<k$ for terms with $\ell>m$). 
	\label{def:wdb}
\end{definition}

\begin{figure}[t!]
	\centering
	\begin{adjustbox}{width=\linewidth}
	%% Grover Mixer QAOA
	%% -----------------
	%\hspace*{-0.5cm}
		\begin{quantikz}[row sep={24pt,between origins},execute at end picture={}]
			\lstick[3]{\rotatebox{90}{$\ket{0^{3-\ell}1^{\ell}}$}}	& \gate[9,nwires={4,5,6}]{\wdb{11}{5}{3}}	& \qw			& \qw			& \wdbgate{6}{3}{3}	& \qw	& \qw	& \dsugate{3}{3}	& \qw\rstick[11]{\rotatebox{90}{$\dicke{11}{\ell}$}} 	\\ 
			\							&						& \qw			& \qw			& 			& \qw	& \qw	& \qw			& \qw					 		\\
			\							&						& \qw			& \qw			& 			& \qw	& \qw	& \qw			& \qw						\\
			\							&						& 			& \lstick{\ket{0}}	& 			& \qw	& \qw	& \dsugate{3}{3}	& \qw						\\
			\							&						& 			& \lstick{\ket{0}}	& 			& \qw	& \qw	& \qw			& \qw						\\
			\							& 						& 			& \lstick{\ket{0}}	& 			& \qw	& \qw	& \qw			& \qw						\\
			\lstick{\ket{0}}					& \qw						& \qw			& \qw			& \wdbgate{5}{2}{3}	& \qw	& \qw	& \dsugate{3}{3}	& \qw						\\
			\lstick{\ket{0}}					& \qw						& \qw			& \qw			& 			& \qw	& \qw	& \qw			& \qw						\\
			\lstick{\ket{0}}					& \qw						& \qw			& \qw			& 			& \qw	& \qw	& \qw			& \qw						\\
			\							&						& 			& \lstick{\ket{0}}	& 			& \qw	& \qw	& \dsugate{2}{2}	& \qw						\\
			\							&						& 			& \lstick{\ket{0}}	& 			& \qw	& \qw	& \qw			& \qw						
		\end{quantikz}
	\end{adjustbox}
	\caption{Our method: An input Hamming weight $0 \leq \ell \leq 3$ in a unary encoding $\ket{0^8 0^{3-\ell} 1^{\ell}}$ for 11 qubits
		is distributed between the bottom $5$ and the top $11-5=6$ qubits by $\wdb{11}{5}{3}$.\newline
		It is then recursively distributed on the top 6 qubits between $3$ and $6-3=3$ qubits using $\wdb{6}{3}{3}$,
		and on the bottom 5 qubits between $2$ and $5-2=3$ qubits using $\wdb{6}{3}{3}$.\newline
		Finally, the Hamming weights on the four sets of $3,3,3$ and $2$ qubits (still in unary encodings) are symmetrically distributed
		by Dicke state unitaries $\dsu{3}{3}, \dsu{3}{3}, \dsu{3}{3}$ and $\dsu{2}{2}$, respectively.\\[-2em]
	}
	\label{fig:our-method}
\end{figure}
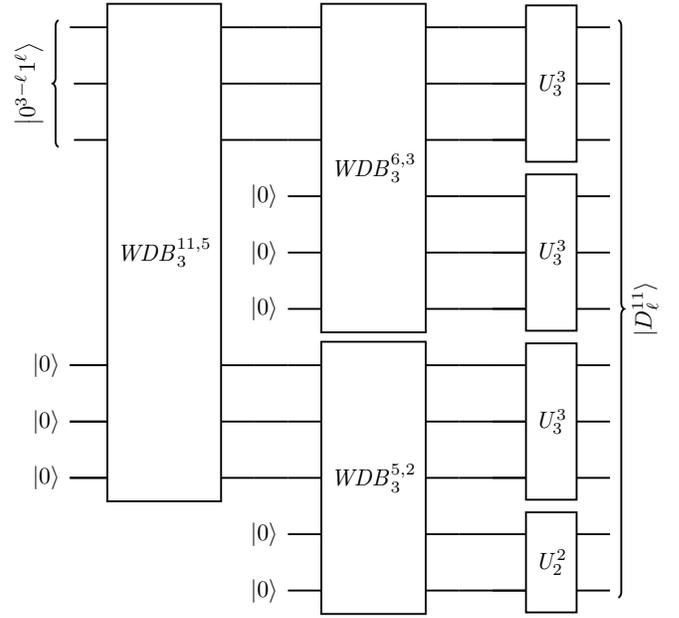

Note that such $\wdb{n}{m}{k}$ acts trivially on all but the $k$ least-significant qubits of both sets of $m$ and $n-m$ qubits.
In Figure~\ref{fig:our-method} we give a complete overview of how Dicke states $\dicke{11}{\ell}$ for $0\leq \ell \leq 3$ can be prepared 
using weight distribution blocks $\wdb{11}{5}{3}, \wdb{6}{3}{3}, \wdb{5}{2}{3}$ and Dicke state unitaries $\dsu{3}{3}$ and $\dsu{2}{2}$.
An interactive Quirk circuit for the whole construction can be found following 
\href{https://algassert.com/quirk#circuit=%7B%22cols%22%3A%5B%5B%22~jat7%22%5D%2C%5B%22Chance%22%2C%22Chance%22%2C%22Chance%22%5D%2C%5B%22Chance3%22%2C1%2C1%2C1%2C1%2C1%2C%22~m3kg%22%2C%22~m3kg%22%2C%22~m3kg%22%5D%2C%5B%22X%22%2C%22%E2%80%A2%22%5D%2C%5B1%2C%22X%22%2C%22%E2%80%A2%22%5D%2C%5B1%2C1%2C%22%E2%80%A2%22%2C1%2C1%2C1%2C%22~3t19%22%5D%2C%5B1%2C1%2C%22%E2%80%A2%22%2C1%2C1%2C1%2C%22%E2%80%A2%22%2C%22~dh7n%22%5D%2C%5B1%2C1%2C%22%E2%80%A2%22%2C1%2C1%2C1%2C1%2C%22%E2%80%A2%22%2C%22~cq6s%22%5D%2C%5B1%2C%22%E2%80%A2%22%2C1%2C1%2C1%2C1%2C%22~m5i%22%5D%2C%5B1%2C%22%E2%80%A2%22%2C1%2C1%2C1%2C1%2C%22%E2%80%A2%22%2C%22~i8ql%22%5D%2C%5B%22%E2%80%A2%22%2C1%2C1%2C1%2C1%2C1%2C%22~kcer%22%5D%2C%5B1%2C%22X%22%2C%22%E2%80%A2%22%5D%2C%5B%22X%22%2C%22%E2%80%A2%22%5D%2C%5B1%2C1%2C%22X%22%2C1%2C1%2C1%2C1%2C1%2C%22%E2%80%A2%22%5D%2C%5B1%2C%22%3E%3E2%22%2C1%2C1%2C1%2C1%2C1%2C%22%E2%80%A2%22%5D%2C%5B1%2C1%2C%22X%22%2C1%2C1%2C1%2C1%2C%22%E2%80%A2%22%5D%2C%5B%22%3E%3E3%22%2C1%2C1%2C1%2C1%2C1%2C%22%E2%80%A2%22%5D%2C%5B1%2C1%2C%22X%22%2C1%2C1%2C1%2C%22%E2%80%A2%22%5D%2C%5B%22Chance3%22%2C1%2C1%2C1%2C1%2C1%2C%22Chance3%22%5D%2C%5B%22%E2%80%A6%22%5D%2C%5B%22Chance3%22%2C1%2C1%2C%22~m3kg%22%2C%22~m3kg%22%2C%22~m3kg%22%5D%2C%5B%22X%22%2C%22%E2%80%A2%22%5D%2C%5B1%2C%22X%22%2C%22%E2%80%A2%22%5D%2C%5B1%2C1%2C%22%E2%80%A2%22%2C%22~2bbd%22%5D%2C%5B1%2C1%2C%22%E2%80%A2%22%2C%22%E2%80%A2%22%2C%22~e5cj%22%5D%2C%5B1%2C1%2C%22%E2%80%A2%22%2C1%2C%22%E2%80%A2%22%2C%22~i81f%22%5D%2C%5B1%2C%22%E2%80%A2%22%2C1%2C%22~ppan%22%5D%2C%5B1%2C%22%E2%80%A2%22%2C1%2C%22%E2%80%A2%22%2C%22~2mea%22%5D%2C%5B%22%E2%80%A2%22%2C1%2C1%2C%22~gv9n%22%5D%2C%5B1%2C%22X%22%2C%22%E2%80%A2%22%5D%2C%5B%22X%22%2C%22%E2%80%A2%22%5D%2C%5B1%2C1%2C%22X%22%2C1%2C1%2C%22%E2%80%A2%22%5D%2C%5B1%2C%22%3E%3E2%22%2C1%2C1%2C%22%E2%80%A2%22%5D%2C%5B1%2C1%2C%22X%22%2C1%2C%22%E2%80%A2%22%5D%2C%5B%22%3E%3E3%22%2C1%2C1%2C%22%E2%80%A2%22%5D%2C%5B1%2C1%2C%22X%22%2C%22%E2%80%A2%22%5D%2C%5B%22Chance3%22%2C1%2C1%2C%22Chance3%22%5D%2C%5B%22%E2%80%A6%22%5D%2C%5B1%2C1%2C1%2C1%2C1%2C1%2C%22Chance3%22%2C1%2C1%2C%22~m3kg%22%2C%22~m3kg%22%5D%2C%5B1%2C1%2C1%2C1%2C1%2C1%2C%22X%22%2C%22%E2%80%A2%22%5D%2C%5B1%2C1%2C1%2C1%2C1%2C1%2C1%2C%22X%22%2C%22%E2%80%A2%22%5D%2C%5B1%2C1%2C1%2C1%2C1%2C1%2C1%2C1%2C%22%E2%80%A2%22%2C%22~o2im%22%5D%2C%5B1%2C1%2C1%2C1%2C1%2C1%2C1%2C1%2C%22%E2%80%A2%22%2C%22%E2%80%A2%22%2C%22~9u45%22%5D%2C%5B1%2C1%2C1%2C1%2C1%2C1%2C1%2C%22%E2%80%A2%22%2C1%2C%22~v2k1%22%5D%2C%5B1%2C1%2C1%2C1%2C1%2C1%2C1%2C%22%E2%80%A2%22%2C1%2C%22%E2%80%A2%22%2C%22~4npg%22%5D%2C%5B1%2C1%2C1%2C1%2C1%2C1%2C%22%E2%80%A2%22%2C1%2C1%2C%22~l3k5%22%5D%2C%5B1%2C1%2C1%2C1%2C1%2C1%2C1%2C%22X%22%2C%22%E2%80%A2%22%5D%2C%5B1%2C1%2C1%2C1%2C1%2C1%2C%22X%22%2C%22%E2%80%A2%22%5D%2C%5B1%2C1%2C1%2C1%2C1%2C1%2C1%2C%22%3E%3E2%22%2C1%2C1%2C%22%E2%80%A2%22%5D%2C%5B1%2C1%2C1%2C1%2C1%2C1%2C1%2C1%2C%22X%22%2C1%2C%22%E2%80%A2%22%5D%2C%5B1%2C1%2C1%2C1%2C1%2C1%2C%22%3E%3E3%22%2C1%2C1%2C%22%E2%80%A2%22%5D%2C%5B1%2C1%2C1%2C1%2C1%2C1%2C1%2C1%2C%22X%22%2C%22%E2%80%A2%22%5D%2C%5B1%2C1%2C1%2C1%2C1%2C1%2C%22Chance3%22%2C1%2C1%2C%22Chance2%22%5D%2C%5B%22~30he%22%2C1%2C1%2C%22~30he%22%2C1%2C1%2C%22~30he%22%2C1%2C1%2C%22~j9or%22%5D%5D%2C%22gates%22%3A%5B%7B%22id%22%3A%22~6dbg%22%2C%22name%22%3A%22%CE%B8%5Cn%E2%88%9A1%2F2%22%2C%22circuit%22%3A%7B%22cols%22%3A%5B%5B%7B%22id%22%3A%22Ryft%22%2C%22arg%22%3A%222acos(sqrt(1%2F2))%22%7D%5D%5D%7D%7D%2C%7B%22id%22%3A%22~16b9%22%2C%22name%22%3A%22%CE%B8%5Cn%E2%88%9A1%2F3%22%2C%22circuit%22%3A%7B%22cols%22%3A%5B%5B%7B%22id%22%3A%22Ryft%22%2C%22arg%22%3A%222acos(sqrt(1%2F3))%22%7D%5D%5D%7D%7D%2C%7B%22id%22%3A%22~qdji%22%2C%22name%22%3A%22%CE%B8%5Cn%E2%88%9A2%2F3%22%2C%22circuit%22%3A%7B%22cols%22%3A%5B%5B%7B%22id%22%3A%22Ryft%22%2C%22arg%22%3A%222acos(sqrt(2%2F3))%22%7D%5D%5D%7D%7D%2C%7B%22id%22%3A%22~jat7%22%2C%22name%22%3A%22Test%22%2C%22circuit%22%3A%7B%22cols%22%3A%5B%5B%22Counting2%22%5D%2C%5B%22%E2%80%A2%22%2C%22%E2%80%A2%22%2C%22X%22%5D%2C%5B%22X%22%2C%22%E2%80%A2%22%5D%2C%5B%22X%22%2C1%2C%22%E2%80%A2%22%5D%5D%7D%7D%2C%7B%22id%22%3A%22~j9or%22%2C%22name%22%3A%22U2%2C2%22%2C%22circuit%22%3A%7B%22cols%22%3A%5B%5B%22X%22%2C%22%E2%80%A2%22%5D%2C%5B%22%E2%80%A2%22%2C%22~6dbg%22%5D%2C%5B%22X%22%2C%22%E2%80%A2%22%5D%5D%7D%7D%2C%7B%22id%22%3A%22~30he%22%2C%22name%22%3A%22U3%2C3%22%2C%22circuit%22%3A%7B%22cols%22%3A%5B%5B%22X%22%2C%22%E2%80%A2%22%5D%2C%5B%22%E2%80%A2%22%2C%22~16b9%22%5D%2C%5B%22X%22%2C%22%E2%80%A2%22%5D%2C%5B%22X%22%2C1%2C%22%E2%80%A2%22%5D%2C%5B%22%E2%80%A2%22%2C%22%E2%80%A2%22%2C%22~qdji%22%5D%2C%5B%22X%22%2C1%2C%22%E2%80%A2%22%5D%2C%5B1%2C%22X%22%2C%22%E2%80%A2%22%5D%2C%5B1%2C%22%E2%80%A2%22%2C%22~6dbg%22%5D%2C%5B1%2C%22X%22%2C%22%E2%80%A2%22%5D%5D%7D%7D%2C%7B%22id%22%3A%22~3t19%22%2C%22name%22%3A%22%CE%B8%5Cn%E2%88%9A20%2F165%22%2C%22circuit%22%3A%7B%22cols%22%3A%5B%5B%7B%22id%22%3A%22Ryft%22%2C%22arg%22%3A%222acos(sqrt(20%2F165))%22%7D%5D%5D%7D%7D%2C%7B%22id%22%3A%22~dh7n%22%2C%22name%22%3A%22%CE%B8%5Cn%E2%88%9A75%2F145%22%2C%22circuit%22%3A%7B%22cols%22%3A%5B%5B%7B%22id%22%3A%22Ryft%22%2C%22arg%22%3A%222acos(sqrt(75%2F145))%22%7D%5D%5D%7D%7D%2C%7B%22id%22%3A%22~cq6s%22%2C%22name%22%3A%22%CE%B8%5Cn%E2%88%9A60%2F70%22%2C%22circuit%22%3A%7B%22cols%22%3A%5B%5B%7B%22id%22%3A%22Ryft%22%2C%22arg%22%3A%222acos(sqrt(60%2F70))%22%7D%5D%5D%7D%7D%2C%7B%22id%22%3A%22~m5i%22%2C%22name%22%3A%22%CE%B8%5Cn%E2%88%9A15%2F55%22%2C%22circuit%22%3A%7B%22cols%22%3A%5B%5B%7B%22id%22%3A%22Ryft%22%2C%22arg%22%3A%222acos(sqrt(15%2F55))%22%7D%5D%5D%7D%7D%2C%7B%22id%22%3A%22~i8ql%22%2C%22name%22%3A%22%CE%B8%5Cn%E2%88%9A30%2F40%22%2C%22circuit%22%3A%7B%22cols%22%3A%5B%5B%7B%22id%22%3A%22Ryft%22%2C%22arg%22%3A%222acos(sqrt(30%2F40))%22%7D%5D%5D%7D%7D%2C%7B%22id%22%3A%22~kcer%22%2C%22name%22%3A%22%CE%B8%5Cn%E2%88%9A6%2F11%22%2C%22circuit%22%3A%7B%22cols%22%3A%5B%5B%7B%22id%22%3A%22Ryft%22%2C%22arg%22%3A%222acos(sqrt(6%2F11))%22%7D%5D%5D%7D%7D%2C%7B%22id%22%3A%22~2bbd%22%2C%22name%22%3A%22%CE%B8%5Cn%E2%88%9A1%2F20%22%2C%22circuit%22%3A%7B%22cols%22%3A%5B%5B%7B%22id%22%3A%22Ryft%22%2C%22arg%22%3A%222acos(sqrt(1%2F20))%22%7D%5D%5D%7D%7D%2C%7B%22id%22%3A%22~e5cj%22%2C%22name%22%3A%22%CE%B8%5Cn%E2%88%9A9%2F19%22%2C%22circuit%22%3A%7B%22cols%22%3A%5B%5B%7B%22id%22%3A%22Ryft%22%2C%22arg%22%3A%222acos(sqrt(9%2F19))%22%7D%5D%5D%7D%7D%2C%7B%22id%22%3A%22~i81f%22%2C%22name%22%3A%22%CE%B8%5Cn%E2%88%9A9%2F10%22%2C%22circuit%22%3A%7B%22cols%22%3A%5B%5B%7B%22id%22%3A%22Ryft%22%2C%22arg%22%3A%222acos(sqrt(9%2F10))%22%7D%5D%5D%7D%7D%2C%7B%22id%22%3A%22~ppan%22%2C%22name%22%3A%22%CE%B8%5Cn%E2%88%9A3%2F15%22%2C%22circuit%22%3A%7B%22cols%22%3A%5B%5B%7B%22id%22%3A%22Ryft%22%2C%22arg%22%3A%222acos(sqrt(3%2F15))%22%7D%5D%5D%7D%7D%2C%7B%22id%22%3A%22~2mea%22%2C%22name%22%3A%22%CE%B8%5Cn%E2%88%9A9%2F12%22%2C%22circuit%22%3A%7B%22cols%22%3A%5B%5B%7B%22id%22%3A%22Ryft%22%2C%22arg%22%3A%222acos(sqrt(9%2F12))%22%7D%5D%5D%7D%7D%2C%7B%22id%22%3A%22~gv9n%22%2C%22name%22%3A%22%CE%B8%5Cn%E2%88%9A3%2F6%22%2C%22circuit%22%3A%7B%22cols%22%3A%5B%5B%7B%22id%22%3A%22Ryft%22%2C%22arg%22%3A%222acos(sqrt(3%2F6))%22%7D%5D%5D%7D%7D%2C%7B%22id%22%3A%22~o2im%22%2C%22name%22%3A%22%CE%B8%5Cn%E2%88%9A1%2F10%22%2C%22circuit%22%3A%7B%22cols%22%3A%5B%5B%7B%22id%22%3A%22Ryft%22%2C%22arg%22%3A%222acos(sqrt(1%2F10))%22%7D%5D%5D%7D%7D%2C%7B%22id%22%3A%22~9u45%22%2C%22name%22%3A%22%CE%B8%5Cn%E2%88%9A6%2F9%22%2C%22circuit%22%3A%7B%22cols%22%3A%5B%5B%7B%22id%22%3A%22Ryft%22%2C%22arg%22%3A%222acos(sqrt(6%2F9))%22%7D%5D%5D%7D%7D%2C%7B%22id%22%3A%22~v2k1%22%2C%22name%22%3A%22%CE%B8%5Cn%E2%88%9A3%2F10%22%2C%22circuit%22%3A%7B%22cols%22%3A%5B%5B%7B%22id%22%3A%22Ryft%22%2C%22arg%22%3A%222acos(sqrt(3%2F10))%22%7D%5D%5D%7D%7D%2C%7B%22id%22%3A%22~4npg%22%2C%22name%22%3A%22%CE%B8%5Cn%E2%88%9A6%2F7%22%2C%22circuit%22%3A%7B%22cols%22%3A%5B%5B%7B%22id%22%3A%22Ryft%22%2C%22arg%22%3A%222acos(sqrt(6%2F7))%22%7D%5D%5D%7D%7D%2C%7B%22id%22%3A%22~l3k5%22%2C%22name%22%3A%22%CE%B8%5Cn%E2%88%9A3%2F5%22%2C%22circuit%22%3A%7B%22cols%22%3A%5B%5B%7B%22id%22%3A%22Ryft%22%2C%22arg%22%3A%222acos(sqrt(3%2F5))%22%7D%5D%5D%7D%7D%2C%7B%22id%22%3A%22~m3kg%22%2C%22name%22%3A%22%7C0%E2%9F%A9%22%2C%22matrix%22%3A%22%7B%7B1%2C0%7D%2C%7B0%2C1%7D%7D%22%7D%5D%7D
}{this link}.

\paragraph*{Comparison}
It is worth to interpret previous work in this new terminology: 
The logarithmic depth scheme for $W$ states~\cite{epfl2019wstate} depicted in Figure~\ref{fig:previous-methods}~(right) uses weight distribution blocks $\wdb{6}{2}{1}, \wdb{4}{2}{1}, \wdb{2}{1}{1}$.
In general, it uses $\wdb{n}{m}{1}$ for arbitrary $n,m$ but restricted to $k=1$.

The scheme for preparing Dicke states $\dicke{n}{m}$ using two Dicke state unitaries~\cite{aktar2022divideconquer} $\dsu{m}{k}, \dsu{n-m}{k}$ shown in Figure~\ref{fig:previous-methods}~(left) can be interpreted as
a single weight distribution block $\wdb{n}{m}{k}$ that \emph{only accepts a fixed input} $\ell = k$, but not $\ell< k$. 

The scheme behind the result in Lemma~\ref{lem:dsu-LNN} uses a weight distribution block $\wdb{n}{1}{k}$ (called a split \& cyclic shift $\SCS$ gate in~\cite{baertschi2019deterministic}), 
which an be implemented in $\bigO(n)$ depth on LNN topologies using a ``stair-shaped'' construction. These stair shapes are the reason, why recursive applications of these $\SCS$ gates 
result in a total depth of $\bigO(n)$ rather than $\bigO(n^2)$. 

\paragraph*{Novelty}
Our construction extends and unifies both the theoretical insights as well as the constructions behind all these approaches to a fully general weight distribution block $\wdb{n}{m}{k}$. 
A crucial observation is that $\wdb{n}{m}{k}$ acts trivially on all but the last $k$ qubits of each set of $m$ and $n-m$ qubits.
For the topology-dependent structures for the recursive application of weight-distribution blocks we will therefore carefully arrange the qubits such that these two sets of 
$k$ qubits each are neighbors along a LNN connectivity spanning the $2k$ qubits. This brings us to the ``Divide'' part of our approach.

%% Weight Distribution Block
\begin{figure*}[t!]
	\centering
	\begin{adjustbox}{width=\linewidth}
	%% Grover Mixer QAOA
	%% -----------------
	%\hspace*{-0.5cm}
		\def\sp{24pt}		
		\begin{quantikz}[row sep={\sp,between origins},execute at end picture={
					%% parallelograms
					\draw[thick,fill=white] ($(\tikzcdmatrixname-5-6)+(-18pt,12pt)$) -- ($(\tikzcdmatrixname-5-6)+(15pt,12pt)$) -- ($(\tikzcdmatrixname-5-6)+(15pt,12pt)+(4*\sp,-4*\sp)$) -- ($(\tikzcdmatrixname-5-6)+(-18pt,12pt)+(4*\sp,-4*\sp)$) -- cycle;
					\draw[thick,fill=white] ($(\tikzcdmatrixname-5-9)+(-18pt,12pt)$) -- ($(\tikzcdmatrixname-5-9)+(15pt,12pt)$) -- ($(\tikzcdmatrixname-5-9)+(15pt,12pt)+(3*\sp,-3*\sp)$) -- ($(\tikzcdmatrixname-5-9)+(-18pt,12pt)+(3*\sp,-3*\sp)$) -- cycle;
					\draw[thick,fill=white] ($(\tikzcdmatrixname-5-12)+(-18pt,12pt)$) -- ($(\tikzcdmatrixname-5-12)+(15pt,12pt)$) -- ($(\tikzcdmatrixname-5-12)+(15pt,12pt)+(2*\sp,-2*\sp)$) -- ($(\tikzcdmatrixname-5-12)+(-18pt,12pt)+(2*\sp,-2*\sp)$) -- cycle;
					\draw[thick,fill=white] ($(\tikzcdmatrixname-5-15)+(-18pt,12pt)$) -- ($(\tikzcdmatrixname-5-15)+(15pt,12pt)$) -- ($(\tikzcdmatrixname-5-15)+(15pt,12pt)+(\sp,-\sp)$) -- ($(\tikzcdmatrixname-5-15)+(-18pt,12pt)+(\sp,-\sp)$) -- cycle;
					%% permutations
					% 2-bit
					\draw[thick] ($(\tikzcdmatrixname-3-23)-(15pt,0pt)$) to ($(\tikzcdmatrixname-4-23)+(15pt,0pt)$);					
					\draw[thick] ($(\tikzcdmatrixname-4-23)-(15pt,0pt)$) to ($(\tikzcdmatrixname-3-23)+(15pt,0pt)$);
					% 3-bit
					\draw[thick] ($(\tikzcdmatrixname-2-25)-(15pt,0pt)$) to ($(\tikzcdmatrixname-4-25)+(15pt,0pt)$);					
					\draw[thick] ($(\tikzcdmatrixname-3-25)-(15pt,0pt)$) to ($(\tikzcdmatrixname-2-25)+(15pt,0pt)$);					
					\draw[thick] ($(\tikzcdmatrixname-4-25)-(15pt,0pt)$) to ($(\tikzcdmatrixname-3-25)+(15pt,0pt)$);					
					% 4-bit
					\draw[thick] ($(\tikzcdmatrixname-1-27)-(15pt,0pt)$) to ($(\tikzcdmatrixname-4-27)+(15pt,0pt)$);										
					\draw[thick] ($(\tikzcdmatrixname-2-27)-(15pt,0pt)$) to ($(\tikzcdmatrixname-1-27)+(15pt,0pt)$);										
					\draw[thick] ($(\tikzcdmatrixname-3-27)-(15pt,0pt)$) to ($(\tikzcdmatrixname-2-27)+(15pt,0pt)$);										
					\draw[thick] ($(\tikzcdmatrixname-4-27)-(15pt,0pt)$) to ($(\tikzcdmatrixname-3-27)+(15pt,0pt)$);										
			}]
			\lstick[4]{\rotatebox{90}{\large$\ket{0^{4-\ell}1^{\ell}}$}}	& \targ{}	& \qw		& \qw\slice{(1)}& \qw	& \qw		& \qw	& \qw	& \qw		& \qw	& \qw	& \qw		& \qw	& \qw	& \ctrl{4}	& \qw		& \qw		& \qw\slice{(2)}& \qw		& \qw		& \targ{}  	& \qw		& \qw		& \qw		& \qw		& \qw		& \gate[4]{}WW	& \qw\slice{(4)}& \qw\rstick[8]{\rotatebox{90}{\large$\frac{1}{\surd\binom{n}{\ell}} \sum\limits_{i=0}^{\ell} \sqrt{\tbinom{m}{i}\tbinom{n-m}{\ell-i}} \ket{0^{4-i}1^i} \ket{0^{4+i-\ell}1^{\ell-i}} $}}		\\ 
			\								& \ctrl{-1}	& \targ{}  	& \qw		& \qw	& \qw		& \qw	& \qw	& \qw		& \qw	& \qw	& \ctrl{3}	& \qw	& \qw	& \qw		& \qw		& \qw		& \qw		& \qw		& \targ{}  	& \ctrl{-1}	& \qw		& \qw		& \qw		& \gate[3]{}WW	& \qw		& \qw		& \qw		& \qw	\\
			\								& \qw		& \ctrl{-1}	& \targ{}  	& \qw	& \qw		& \qw	& \qw	& \ctrl{2}	& \qw	& \qw	& \qw		& \qw	& \qw	& \qw		& \qw		& \qw		& \qw		& \targ{}  	& \ctrl{-1}	& \qw\slice{(3)}& \qw		& \gate[2]{}WW	& \qw		& \qw		& \qw		& \qw		& \qw		& \qw	\\
			\								& \qw		& \qw		& \ctrl{-1}	& \qw	& \ctrl{1}	& \qw	& \qw	& \qw		& \qw	& \qw	& \qw		& \qw	& \qw	& \qw		& \qw		& \qw		& \qw		& \ctrl{-1}	& \qw		& \qw		& \targ{}	& \qw		& \targ{}	& \qw		& \targ{}	& \qw		& \targ{}	& \qw	\\
			\lstick{\ket{0}}						& \qw		& \qw		& \qw		& \qw	& \qw		& \qw	& \qw	& \qw		& \qw	& \qw	& \qw		& \qw	& \qw	& \qw		& \qw		& \qw		& \qw		& \qw		& \qw		& \qw		& \qw		& \qw		& \qw		& \qw		& \qw		& \ctrl{-1}	& \ctrl{-1}	& \qw	\\
			\lstick{\ket{0}}						& \qw		& \qw		& \qw		& \qw	& \qw		& \qw	& \qw	& \qw		& \qw	& \qw	& \qw		& \qw	& \qw	& \qw		& \qw		& \qw		& \qw		& \qw		& \qw		& \qw		& \qw		& \qw		& \qw		& \ctrl{-2}	& \ctrl{-2}	& \qw		& \qw		& \qw	\\
			\lstick{\ket{0}}						& \qw		& \qw		& \qw		& \qw	& \qw		& \qw	& \qw	& \qw		& \qw	& \qw	& \qw		& \qw	& \qw	& \qw		& \qw		& \qw		& \qw		& \qw		& \qw		& \qw		& \qw		& \ctrl{-3}	& \ctrl{-3}	& \qw		& \qw		& \qw		& \qw		& \qw	\\	
			\lstick{\ket{0}}						& \qw		& \qw		& \qw		& \qw	& \qw		& \qw	& \qw	& \qw		& \qw	& \qw	& \qw		& \qw	& \qw	& \qw		& \qw		& \qw		& \qw		& \qw		& \qw		& \qw		& \ctrl{-4}	& \qw		& \qw		& \qw		& \qw		& \qw		& \qw		& \qw		
		\end{quantikz}
	\end{adjustbox}
	\caption{Schematics of a weight distribution block $\wdb{n}{m}{k=4}$, where we restrict ourselves to show the relevant $4+4$ of the $n$ qubits. We have from left to right: 
			(1) switching from a unary encoding of the input Hamming weight $\ell$ to a one-hot-encoding;
			(2) adding a superposition of Hamming weights $i \leq \ell$ into the second register, using a unary encoding;
			(3) switching back to a unary encoding in the first register;
			(4) subtracting the weight $i$ in the second register from the weight $\ell$ in the first register.\newline
			The large controlled gates in steps 2, 4 have depth $\bigO(k)$ but also a stair shape. Thus they can be executed partially in parallel, increasing the asymptotic depth by only a constant factor.
		If $m<4$, the second register has to be truncated to $m$ qubits.
	}
	\label{fig:wdb}
\end{figure*}
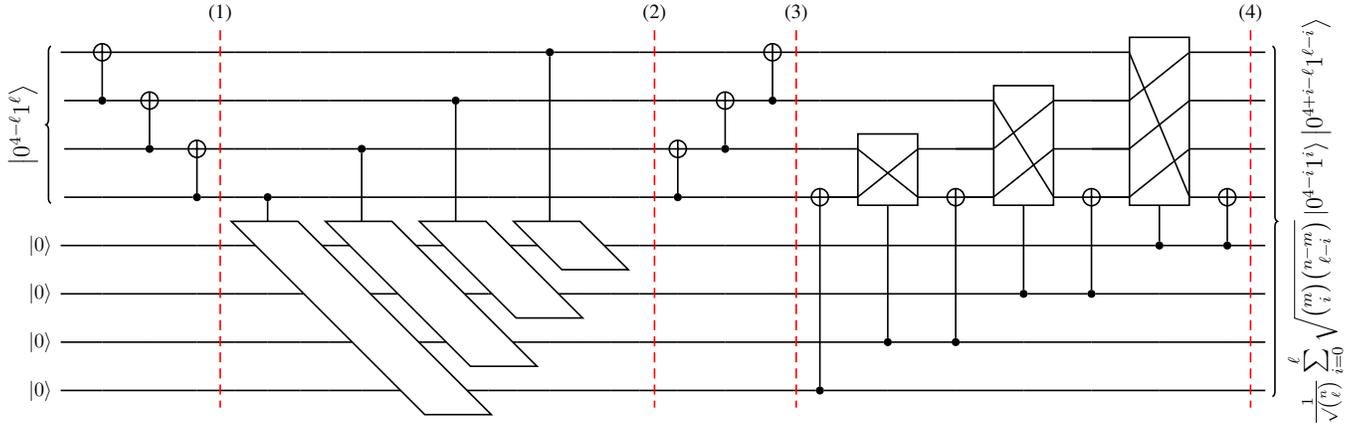

%% Weight Distribution Block
\section{Weight Distribution Block Circuits}
\label{sec:divide}

In this section, we present a circuit construction for a weight distribution block $\wdb{n}{m}{k}$ which has depth $\bigO(k)$ and needs $\bigO(k^2)$ gates but no ancilla qubits.
Its schematic is given in Figure~\ref{fig:wdb} and details are presented in Figure~\ref{fig:add-subtract}. 
Recall from Definition~\ref{def:wdb}, that for any unary encoded input $0\leq \ell \leq k$ for $n$ qubits, $\wdb{n}{m}{k}$ has to distribute $\ell$ into superpositions
of (correctly weighted) unary encoded inputs into $n-m$ and $m$ many qubits. As $\ell \leq k$, we can restrict ourselves to a circuit involving the first $k$ qubits of the latter sets. 

Our circuit construction (illustrated for $k=4$ in Figure~\ref{fig:wdb}) implements the following steps:
\begin{align}
	\ket{0^k}&\ket{0^{k-\ell}1^{\ell}}															\notag 		\\
	& \mapsto \ket{0^k}\ket{0^{k-\ell}1^1 0^{\ell-1}}													\label{step1}	\\
	& \mapsto \frac{1}{\surd\tbinom{n}{\ell}} \sum_{i=0}^{\ell} \sqrt{\tbinom{m}{i}\tbinom{n-m}{\ell-i}} \ket{0^{k-i}1^i} \ket{0^{k-\ell}1^1 0^{\ell-1}}	\label{step2}	\\
	& \mapsto \frac{1}{\surd\tbinom{n}{\ell}} \sum_{i=0}^{\ell} \sqrt{\tbinom{m}{i}\tbinom{n-m}{\ell-i}} \ket{0^{k-i}1^i} \ket{0^{k-\ell}1^{\ell}}		\label{step3}	\\
	& \mapsto \frac{1}{\surd\tbinom{n}{\ell}} \sum_{i=0}^{\ell} \sqrt{\tbinom{m}{i}\tbinom{n-m}{\ell-i}} \ket{0^{k-i}1^i} \ket{0^{k+i-\ell}1^{\ell-i}}	\label{step4}	
\end{align}

In Step~\eqref{step1}, we transform the unary encoding of $\ell$ into a one-hot-encoding of $\ell$.
This allows for simpler one-qubit-controlled state additions into the $\ket{0^k}$-initialized second register during Step~\eqref{step2}. This step is illustrated in detail in Figure~\ref{fig:add-subtract}~(left). 
In Step~\eqref{step3}, we revert the one-hot-encoding of the first register into a unary encoding.
Finally, in Step~\eqref{step4}, we subtract (in superposition) from the first register the value of the second register.
This unary subtraction circuit is illustrated in detail in Figure~\ref{fig:add-subtract}~(right) 
and given as a Quirk circuit 
\href{https://algassert.com/quirk#circuit=%7B%22cols%22%3A%5B%5B%22X%22%2C%22X%22%2C%22X%22%2C1%2C%22X%22%2C%22X%22%5D%2C%5B%22Chance%22%2C%22Chance%22%2C%22Chance%22%2C%22Chance%22%2C%22Chance%22%2C%22Chance%22%2C%22Chance%22%2C%22Chance%22%5D%2C%5B1%2C1%2C1%2C%22X%22%2C1%2C1%2C1%2C%22%E2%80%A2%22%5D%2C%5B1%2C1%2C%22%E2%80%A2%22%2C%22X%22%5D%2C%5B1%2C1%2C%22X%22%2C%22%E2%80%A2%22%2C1%2C1%2C%22%E2%80%A2%22%5D%2C%5B1%2C1%2C%22%E2%80%A2%22%2C%22X%22%5D%2C%5B1%2C1%2C1%2C%22X%22%2C1%2C1%2C%22%E2%80%A2%22%5D%2C%5B1%2C%22%E2%80%A2%22%2C%22X%22%5D%2C%5B1%2C%22X%22%2C%22%E2%80%A2%22%2C1%2C1%2C%22%E2%80%A2%22%5D%2C%5B1%2C%22%E2%80%A2%22%2C%22X%22%5D%2C%5B1%2C1%2C%22%E2%80%A2%22%2C%22X%22%5D%2C%5B1%2C1%2C%22X%22%2C%22%E2%80%A2%22%2C1%2C%22%E2%80%A2%22%5D%2C%5B1%2C1%2C%22%E2%80%A2%22%2C%22X%22%5D%2C%5B1%2C1%2C1%2C%22X%22%2C1%2C%22%E2%80%A2%22%5D%2C%5B%22%E2%80%A2%22%2C%22X%22%5D%2C%5B%22X%22%2C%22%E2%80%A2%22%2C1%2C1%2C%22%E2%80%A2%22%5D%2C%5B%22%E2%80%A2%22%2C%22X%22%5D%2C%5B1%2C%22%E2%80%A2%22%2C%22X%22%5D%2C%5B1%2C%22X%22%2C%22%E2%80%A2%22%2C1%2C%22%E2%80%A2%22%5D%2C%5B1%2C%22%E2%80%A2%22%2C%22X%22%5D%2C%5B1%2C1%2C%22%E2%80%A2%22%2C%22X%22%5D%2C%5B1%2C1%2C%22X%22%2C%22%E2%80%A2%22%2C%22%E2%80%A2%22%5D%2C%5B1%2C1%2C%22%E2%80%A2%22%2C%22X%22%5D%2C%5B1%2C1%2C1%2C%22X%22%2C%22%E2%80%A2%22%5D%5D%7D
}{here}.

%% Weight Distribution Block Details
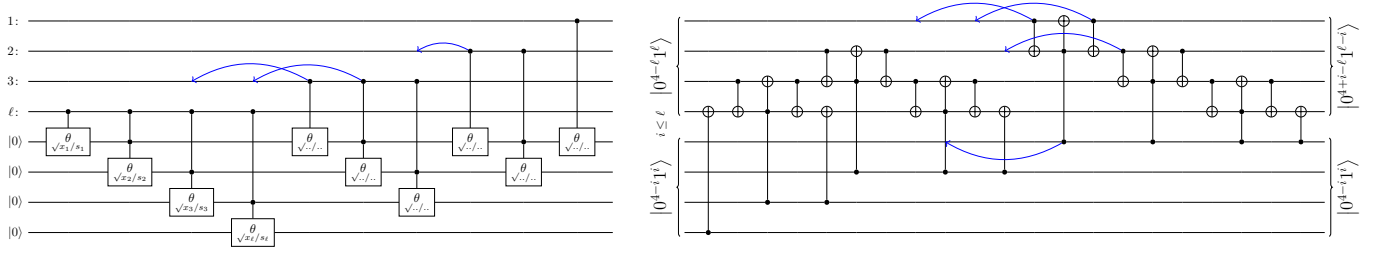
\begin{figure*}[t!]
	\centering
	\begin{adjustbox}{width=\linewidth}
	%% Grover Mixer QAOA
	%% -----------------
	%\hspace*{-0.5cm}
		\begin{quantikz}[row sep={24pt,between origins},execute at end picture={
				\draw[blue,semithick,->]	(\tikzcdmatrixname-3-6) edge[bend right] (\tikzcdmatrixname-3-4);
				\draw[blue,semithick,->]	(\tikzcdmatrixname-3-7) edge[bend right] (\tikzcdmatrixname-3-5);
				\draw[blue,semithick,->]	(\tikzcdmatrixname-2-9) edge[bend right] (\tikzcdmatrixname-2-8);
				\draw[blue,semithick,->]	(\tikzcdmatrixname-1-28) edge[bend right] (\tikzcdmatrixname-1-24);
				\draw[blue,semithick,->]	(\tikzcdmatrixname-5-29) edge[bend left] (\tikzcdmatrixname-5-25);
				\draw[blue,semithick,->]	(\tikzcdmatrixname-1-30) edge[bend right] (\tikzcdmatrixname-1-26);
				\draw[blue,semithick,->]	(\tikzcdmatrixname-2-31) edge[bend right] (\tikzcdmatrixname-2-27);
			}]
			\lstick{$1\colon$}	& \qw			& \qw			& \qw			& \qw				& \qw			& \qw			& \qw			& \qw			& \qw			& \ctrl{4}		& \qw	& 	& 	& 	& 	\lstick[4]{\rotatebox{90}{\Large$\ket{0^{4-\ell}1^{\ell}}$}}	& \qw		& \qw		& \qw		& \qw		& \qw		& \qw		& \qw		& \qw		& \qw		& \qw		& \qw		& \ctrl{1}	& \targ{}	& \ctrl{1}	& \qw		& \qw		& \qw		& \qw		& \qw		& \qw		& \qw		& \qw\rstick[4]{\rotatebox{90}{\Large$\ket{0^{4+i-\ell}1^{\ell-i}}$}}			\\
			\lstick{$2\colon$}	& \qw			& \qw			& \qw			& \qw				& \qw			& \qw			& \qw			& \ctrl{3}		& \ctrl{3}		& \qw			& \qw	& 	& 	& 	& 									& \qw		& \qw		& \qw		& \qw		& \ctrl{1}	& \targ{}	& \ctrl{1}	& \qw		& \qw		& \qw		& \qw		& \targ{}	& \ctrl{-1}	& \targ{}	& \ctrl{1}	& \targ{}	& \ctrl{1}	& \qw		& \qw		& \qw		& \qw		& \qw		\\
			\lstick{$3\colon$}	& \qw			& \qw			& \qw			& \qw				& \ctrl{2}		& \ctrl{2}		& \ctrl{3}		& \qw			& \qw			& \qw			& \qw	& 	& 	& 	& 									& \qw		& \ctrl{1}	& \targ{}	& \ctrl{1}	& \targ{}	& \ctrl{-1}	& \targ{}	& \ctrl{1}	& \targ{}	& \ctrl{1}	& \qw		& \qw		& \qw		& \qw		& \targ{}	& \ctrl{-1}	& \targ{}	& \ctrl{1}	& \targ{}	& \ctrl{1}	& \qw		& \qw		\\
			\lstick{$\ell\colon$}	& \ctrl{1}		& \ctrl{1}		& \ctrl{2}		& \ctrl{3}			& \qw			& \qw			& \qw			& \qw			& \qw			& \qw			& \qw	& 	& 	& 	& 	\lstick{\rotatebox{90}{$i\leq \ell$\qquad}\quad}		& \targ{}	& \targ{}	& \ctrl{-1}	& \targ{}	& \targ{}	& \qw		& \qw		& \targ{}	& \ctrl{-1}	& \targ{}	& \targ{}	& \qw		& \qw		& \qw		& \qw		& \qw		& \qw		& \targ{}	& \ctrl{-1}	& \targ{}	& \targ{}	& \qw		\\
			\lstick{\ket{0}}	& \rygate{x_1}{s_1}	& \ctrl{1}	   	& \qw			& \qw				& \rygate{..}{..}	& \ctrl{1}		& \qw			& \rygate{..}{..}	& \ctrl{1}		& \rygate{..}{..}	& \qw	& 	& 	& 	& 	\lstick[4]{\rotatebox{90}{\Large$\ket{0^{4-i}1^{i}}$}}		& \qw		& \qw		& \qw		& \qw		& \qw		& \qw		& \qw		& \qw		& \qw		& \qw		& \qw		& \qw		& \ctrl{-3}	& \qw		& \qw		& \ctrl{-2}	& \qw		& \qw		& \ctrl{-1}	& \qw		& \ctrl{-1}	& \qw\rstick[4]{\rotatebox{90}{\Large$\ket{0^{4-i}1^{i}}$}}			\\
			\lstick{\ket{0}}	& \qw			& \rygate{x_2}{s_2}	& \ctrl{1}	   	& \qw				& \qw			& \rygate{..}{..}	& \ctrl{1}	   	& \qw			& \rygate{..}{..}	& \qw			& \qw	& 	& 	& 	& 									& \qw		& \qw		& \qw		& \qw		& \qw		& \ctrl{-3}	& \qw		& \qw		& \ctrl{-2}	& \qw		& \ctrl{-2}	& \qw		& \qw		& \qw		& \qw		& \qw		& \qw		& \qw		& \qw		& \qw		& \qw		& \qw		\\
			\lstick{\ket{0}}	& \qw			& \qw			& \rygate{x_3}{s_3}	& \ctrl{1}	   		& \qw			& \qw			& \rygate{..}{..}	& \qw			& \qw			& \qw			& \qw	& 	& 	& 	& 									& \qw		& \qw		& \ctrl{-3}	& \qw		& \ctrl{-3}	& \qw		& \qw		& \qw		& \qw		& \qw		& \qw		& \qw		& \qw		& \qw		& \qw		& \qw		& \qw		& \qw		& \qw		& \qw		& \qw		& \qw		\\	
			\lstick{\ket{0}}	& \qw			& \qw			& \qw			& \rygate{x_{\ell}}{s_{\ell}}	& \qw			& \qw			& \qw			& \qw			& \qw			& \qw			& \qw	& 	& 	& 	& 									& \ctrl{-4}	& \qw		& \qw		& \qw		& \qw		& \qw		& \qw		& \qw		& \qw		& \qw		& \qw		& \qw		& \qw		& \qw		& \qw		& \qw		& \qw		& \qw		& \qw		& \qw		& \qw		& \qw							
		\end{quantikz}
	\end{adjustbox}
	\caption{Detailed implementations of (LEFT) the controlled addition circuit and (RIGHT) the unary subtraction circuit 
		which are used in the weight distribution block $\wdb{n}{m}{k=4}$ in Figure~\ref{fig:wdb}. 
		In both circuits, the gates take on a stair shape on the second register, which enables parallelization of the controlled gates (blue arrows).\newline
		For controlled addition of the state $\smash{\tbinom{n}{\ell}^{-1/2} \sum_{i=0}^{\ell} \surd{\tbinom{m}{i}}\surd{\tbinom{n-m}{\ell-i}} \ket{0^{4-i}1^i}}$ into the second register, 
		we first precompute the terms $x_i = \tbinom{m}{i}\tbinom{n-m}{\ell-i}$ as well as the suffix sums $s_i = \sum_{i=0}^{\ell} x_i$. The state can then be added 
		by a series of controlled $\theta_{\surd x_i/s_i}$-gates.
		To simplify the large controlled permutation gates in the unary subtraction circuit, we decompose controlled permutation gates into a stair of controlled SWAP gates, known as Fredkin gates, 
		followed by the standard decomposition of a Fredkin gate into a Toffoli and two CNOT gates.\newline 
		For LNN topologies, we combine the circuit with a SWAP network that keeps the controls of one register in direct neighborhood to the gates in the second register.
		This still allows for parallelization, at an overhead of additional $\bigO(k)$ depth / $\bigO(k^2)$ gates.
	}
	\label{fig:add-subtract}
\end{figure*}

We now derive costs from the presented circuits in Figures~\ref{fig:wdb}~\&~\ref{fig:add-subtract}:
Clearly, the encoding changes in Steps~\eqref{step1}~\&~\eqref{step3} need only $\bigO(k)$ gates and depth. 
For the controlled addition in Step~\eqref{step2}, each possible value $0\leq \ell \leq k$ gives rise to a controlled addition that needs $\bigO(\ell)$ gates and depth.
Since the addition gates have a stair shape, they can be partially parallelized, and hence we get a depth of $\bigO(k)$ for a total gate count of $\bigO(k^2)$.
The same holds for the controlled permutation gates which form a building block of the unary subtraction in Step~\eqref{step4}.
Hence the overall depth of our circuit for $\wdb{n}{m}{k}$ is $\bigO(k)$ and the total gate count is $\bigO(k^2)$.

Additional details need to be discussed here:
First, if $m<k$, then the second register needs to be truncated from $k$ to $m$ qubits. This can be achieved by removing any operations involving the most-significant $k-m$ qubits of the second register.
This is unproblematic, as these operations correspond to terms $\tbinom{m}{i}=0$ for $i>m$ in Step~\eqref{step3}, and thus to identities or $0$-valued $1$-controls in the Step~\eqref{step4}.
We illustrate this with interactive Quirk circuits for weight distribution blocks
\href{https://algassert.com/quirk#circuit=%7B%22cols%22%3A%5B%5B%22~jat7%22%5D%2C%5B%22Chance%22%2C%22Chance%22%2C%22Chance%22%5D%2C%5B%22X%22%2C%22%E2%80%A2%22%5D%2C%5B1%2C%22X%22%2C%22%E2%80%A2%22%5D%2C%5B%22Chance%22%2C%22Chance%22%2C%22Chance%22%2C%22Chance3%22%5D%2C%5B1%2C1%2C%22%E2%80%A2%22%2C%22~2bbd%22%5D%2C%5B1%2C1%2C%22%E2%80%A2%22%2C%22%E2%80%A2%22%2C%22~e5cj%22%5D%2C%5B1%2C1%2C%22%E2%80%A2%22%2C1%2C%22%E2%80%A2%22%2C%22~i81f%22%5D%2C%5B1%2C%22%E2%80%A2%22%2C1%2C%22~ppan%22%5D%2C%5B1%2C%22%E2%80%A2%22%2C1%2C%22%E2%80%A2%22%2C%22~2mea%22%5D%2C%5B%22%E2%80%A2%22%2C1%2C1%2C%22~gv9n%22%5D%2C%5B%22Chance%22%2C%22Chance%22%2C%22Chance%22%2C%22Chance3%22%5D%2C%5B1%2C%22X%22%2C%22%E2%80%A2%22%5D%2C%5B%22X%22%2C%22%E2%80%A2%22%5D%2C%5B%22Chance%22%2C%22Chance%22%2C%22Chance%22%2C%22Chance3%22%5D%2C%5B1%2C1%2C%22X%22%2C1%2C1%2C%22%E2%80%A2%22%5D%2C%5B1%2C%22%3E%3E2%22%2C1%2C1%2C%22%E2%80%A2%22%5D%2C%5B1%2C1%2C%22X%22%2C1%2C%22%E2%80%A2%22%5D%2C%5B%22%3E%3E3%22%2C1%2C1%2C%22%E2%80%A2%22%5D%2C%5B1%2C1%2C%22X%22%2C%22%E2%80%A2%22%5D%2C%5B%22Chance3%22%2C1%2C1%2C%22Chance3%22%5D%2C%5B%22~30he%22%2C1%2C1%2C%22~30he%22%5D%5D%2C%22gates%22%3A%5B%7B%22id%22%3A%22~6dbg%22%2C%22name%22%3A%22%CE%B8%5Cn%E2%88%9A1%2F2%22%2C%22circuit%22%3A%7B%22cols%22%3A%5B%5B%7B%22id%22%3A%22Ryft%22%2C%22arg%22%3A%222acos(sqrt(1%2F2))%22%7D%5D%5D%7D%7D%2C%7B%22id%22%3A%22~16b9%22%2C%22name%22%3A%22%CE%B8%5Cn%E2%88%9A1%2F3%22%2C%22circuit%22%3A%7B%22cols%22%3A%5B%5B%7B%22id%22%3A%22Ryft%22%2C%22arg%22%3A%222acos(sqrt(1%2F3))%22%7D%5D%5D%7D%7D%2C%7B%22id%22%3A%22~qdji%22%2C%22name%22%3A%22%CE%B8%5Cn%E2%88%9A2%2F3%22%2C%22circuit%22%3A%7B%22cols%22%3A%5B%5B%7B%22id%22%3A%22Ryft%22%2C%22arg%22%3A%222acos(sqrt(2%2F3))%22%7D%5D%5D%7D%7D%2C%7B%22id%22%3A%22~jat7%22%2C%22name%22%3A%22Test%22%2C%22circuit%22%3A%7B%22cols%22%3A%5B%5B%22Counting2%22%5D%2C%5B%22%E2%80%A2%22%2C%22%E2%80%A2%22%2C%22X%22%5D%2C%5B%22X%22%2C%22%E2%80%A2%22%5D%2C%5B%22X%22%2C1%2C%22%E2%80%A2%22%5D%5D%7D%7D%2C%7B%22id%22%3A%22~j9or%22%2C%22name%22%3A%22U2%2C2%22%2C%22circuit%22%3A%7B%22cols%22%3A%5B%5B%22X%22%2C%22%E2%80%A2%22%5D%2C%5B%22%E2%80%A2%22%2C%22~6dbg%22%5D%2C%5B%22X%22%2C%22%E2%80%A2%22%5D%5D%7D%7D%2C%7B%22id%22%3A%22~30he%22%2C%22name%22%3A%22U3%2C3%22%2C%22circuit%22%3A%7B%22cols%22%3A%5B%5B%22X%22%2C%22%E2%80%A2%22%5D%2C%5B%22%E2%80%A2%22%2C%22~16b9%22%5D%2C%5B%22X%22%2C%22%E2%80%A2%22%5D%2C%5B%22X%22%2C1%2C%22%E2%80%A2%22%5D%2C%5B%22%E2%80%A2%22%2C%22%E2%80%A2%22%2C%22~qdji%22%5D%2C%5B%22X%22%2C1%2C%22%E2%80%A2%22%5D%2C%5B1%2C%22X%22%2C%22%E2%80%A2%22%5D%2C%5B1%2C%22%E2%80%A2%22%2C%22~6dbg%22%5D%2C%5B1%2C%22X%22%2C%22%E2%80%A2%22%5D%5D%7D%7D%2C%7B%22id%22%3A%22~3t19%22%2C%22name%22%3A%22%CE%B8%5Cn%E2%88%9A20%2F165%22%2C%22circuit%22%3A%7B%22cols%22%3A%5B%5B%7B%22id%22%3A%22Ryft%22%2C%22arg%22%3A%222acos(sqrt(20%2F165))%22%7D%5D%5D%7D%7D%2C%7B%22id%22%3A%22~dh7n%22%2C%22name%22%3A%22%CE%B8%5Cn%E2%88%9A75%2F145%22%2C%22circuit%22%3A%7B%22cols%22%3A%5B%5B%7B%22id%22%3A%22Ryft%22%2C%22arg%22%3A%222acos(sqrt(75%2F145))%22%7D%5D%5D%7D%7D%2C%7B%22id%22%3A%22~cq6s%22%2C%22name%22%3A%22%CE%B8%5Cn%E2%88%9A60%2F70%22%2C%22circuit%22%3A%7B%22cols%22%3A%5B%5B%7B%22id%22%3A%22Ryft%22%2C%22arg%22%3A%222acos(sqrt(60%2F70))%22%7D%5D%5D%7D%7D%2C%7B%22id%22%3A%22~m5i%22%2C%22name%22%3A%22%CE%B8%5Cn%E2%88%9A15%2F55%22%2C%22circuit%22%3A%7B%22cols%22%3A%5B%5B%7B%22id%22%3A%22Ryft%22%2C%22arg%22%3A%222acos(sqrt(15%2F55))%22%7D%5D%5D%7D%7D%2C%7B%22id%22%3A%22~i8ql%22%2C%22name%22%3A%22%CE%B8%5Cn%E2%88%9A30%2F40%22%2C%22circuit%22%3A%7B%22cols%22%3A%5B%5B%7B%22id%22%3A%22Ryft%22%2C%22arg%22%3A%222acos(sqrt(30%2F40))%22%7D%5D%5D%7D%7D%2C%7B%22id%22%3A%22~kcer%22%2C%22name%22%3A%22%CE%B8%5Cn%E2%88%9A6%2F11%22%2C%22circuit%22%3A%7B%22cols%22%3A%5B%5B%7B%22id%22%3A%22Ryft%22%2C%22arg%22%3A%222acos(sqrt(6%2F11))%22%7D%5D%5D%7D%7D%2C%7B%22id%22%3A%22~2bbd%22%2C%22name%22%3A%22%CE%B8%5Cn%E2%88%9A1%2F20%22%2C%22circuit%22%3A%7B%22cols%22%3A%5B%5B%7B%22id%22%3A%22Ryft%22%2C%22arg%22%3A%222acos(sqrt(1%2F20))%22%7D%5D%5D%7D%7D%2C%7B%22id%22%3A%22~e5cj%22%2C%22name%22%3A%22%CE%B8%5Cn%E2%88%9A9%2F19%22%2C%22circuit%22%3A%7B%22cols%22%3A%5B%5B%7B%22id%22%3A%22Ryft%22%2C%22arg%22%3A%222acos(sqrt(9%2F19))%22%7D%5D%5D%7D%7D%2C%7B%22id%22%3A%22~i81f%22%2C%22name%22%3A%22%CE%B8%5Cn%E2%88%9A9%2F10%22%2C%22circuit%22%3A%7B%22cols%22%3A%5B%5B%7B%22id%22%3A%22Ryft%22%2C%22arg%22%3A%222acos(sqrt(9%2F10))%22%7D%5D%5D%7D%7D%2C%7B%22id%22%3A%22~ppan%22%2C%22name%22%3A%22%CE%B8%5Cn%E2%88%9A3%2F15%22%2C%22circuit%22%3A%7B%22cols%22%3A%5B%5B%7B%22id%22%3A%22Ryft%22%2C%22arg%22%3A%222acos(sqrt(3%2F15))%22%7D%5D%5D%7D%7D%2C%7B%22id%22%3A%22~2mea%22%2C%22name%22%3A%22%CE%B8%5Cn%E2%88%9A9%2F12%22%2C%22circuit%22%3A%7B%22cols%22%3A%5B%5B%7B%22id%22%3A%22Ryft%22%2C%22arg%22%3A%222acos(sqrt(9%2F12))%22%7D%5D%5D%7D%7D%2C%7B%22id%22%3A%22~gv9n%22%2C%22name%22%3A%22%CE%B8%5Cn%E2%88%9A3%2F6%22%2C%22circuit%22%3A%7B%22cols%22%3A%5B%5B%7B%22id%22%3A%22Ryft%22%2C%22arg%22%3A%222acos(sqrt(3%2F6))%22%7D%5D%5D%7D%7D%2C%7B%22id%22%3A%22~o2im%22%2C%22name%22%3A%22%CE%B8%5Cn%E2%88%9A1%2F10%22%2C%22circuit%22%3A%7B%22cols%22%3A%5B%5B%7B%22id%22%3A%22Ryft%22%2C%22arg%22%3A%222acos(sqrt(1%2F10))%22%7D%5D%5D%7D%7D%2C%7B%22id%22%3A%22~9u45%22%2C%22name%22%3A%22%CE%B8%5Cn%E2%88%9A6%2F9%22%2C%22circuit%22%3A%7B%22cols%22%3A%5B%5B%7B%22id%22%3A%22Ryft%22%2C%22arg%22%3A%222acos(sqrt(6%2F9))%22%7D%5D%5D%7D%7D%2C%7B%22id%22%3A%22~v2k1%22%2C%22name%22%3A%22%CE%B8%5Cn%E2%88%9A3%2F10%22%2C%22circuit%22%3A%7B%22cols%22%3A%5B%5B%7B%22id%22%3A%22Ryft%22%2C%22arg%22%3A%222acos(sqrt(3%2F10))%22%7D%5D%5D%7D%7D%2C%7B%22id%22%3A%22~4npg%22%2C%22name%22%3A%22%CE%B8%5Cn%E2%88%9A6%2F7%22%2C%22circuit%22%3A%7B%22cols%22%3A%5B%5B%7B%22id%22%3A%22Ryft%22%2C%22arg%22%3A%222acos(sqrt(6%2F7))%22%7D%5D%5D%7D%7D%2C%7B%22id%22%3A%22~l3k5%22%2C%22name%22%3A%22%CE%B8%5Cn%E2%88%9A3%2F5%22%2C%22circuit%22%3A%7B%22cols%22%3A%5B%5B%7B%22id%22%3A%22Ryft%22%2C%22arg%22%3A%222acos(sqrt(3%2F5))%22%7D%5D%5D%7D%7D%5D%7D
}{$\wdb{6}{3}{3}$} and
\href{https://algassert.com/quirk#circuit=%7B%22cols%22%3A%5B%5B%22~jat7%22%5D%2C%5B%22Chance%22%2C%22Chance%22%2C%22Chance%22%5D%2C%5B%22X%22%2C%22%E2%80%A2%22%5D%2C%5B1%2C%22X%22%2C%22%E2%80%A2%22%5D%2C%5B%22Chance%22%2C%22Chance%22%2C%22Chance%22%2C%22Chance2%22%5D%2C%5B1%2C1%2C%22%E2%80%A2%22%2C%22~o2im%22%5D%2C%5B1%2C1%2C%22%E2%80%A2%22%2C%22%E2%80%A2%22%2C%22~9u45%22%5D%2C%5B1%2C%22%E2%80%A2%22%2C1%2C%22~v2k1%22%5D%2C%5B1%2C%22%E2%80%A2%22%2C1%2C%22%E2%80%A2%22%2C%22~4npg%22%5D%2C%5B%22%E2%80%A2%22%2C1%2C1%2C%22~l3k5%22%5D%2C%5B%22Chance%22%2C%22Chance%22%2C%22Chance%22%2C%22Chance2%22%5D%2C%5B1%2C%22X%22%2C%22%E2%80%A2%22%5D%2C%5B%22X%22%2C%22%E2%80%A2%22%5D%2C%5B%22Chance%22%2C%22Chance%22%2C%22Chance%22%2C%22Chance2%22%5D%2C%5B1%2C%22%3E%3E2%22%2C1%2C1%2C%22%E2%80%A2%22%5D%2C%5B1%2C1%2C%22X%22%2C1%2C%22%E2%80%A2%22%5D%2C%5B%22%3E%3E3%22%2C1%2C1%2C%22%E2%80%A2%22%5D%2C%5B1%2C1%2C%22X%22%2C%22%E2%80%A2%22%5D%2C%5B%22Chance3%22%2C1%2C1%2C%22Chance2%22%5D%2C%5B%22~30he%22%2C1%2C1%2C%22~j9or%22%5D%5D%2C%22gates%22%3A%5B%7B%22id%22%3A%22~6dbg%22%2C%22name%22%3A%22%CE%B8%5Cn%E2%88%9A1%2F2%22%2C%22circuit%22%3A%7B%22cols%22%3A%5B%5B%7B%22id%22%3A%22Ryft%22%2C%22arg%22%3A%222acos(sqrt(1%2F2))%22%7D%5D%5D%7D%7D%2C%7B%22id%22%3A%22~16b9%22%2C%22name%22%3A%22%CE%B8%5Cn%E2%88%9A1%2F3%22%2C%22circuit%22%3A%7B%22cols%22%3A%5B%5B%7B%22id%22%3A%22Ryft%22%2C%22arg%22%3A%222acos(sqrt(1%2F3))%22%7D%5D%5D%7D%7D%2C%7B%22id%22%3A%22~qdji%22%2C%22name%22%3A%22%CE%B8%5Cn%E2%88%9A2%2F3%22%2C%22circuit%22%3A%7B%22cols%22%3A%5B%5B%7B%22id%22%3A%22Ryft%22%2C%22arg%22%3A%222acos(sqrt(2%2F3))%22%7D%5D%5D%7D%7D%2C%7B%22id%22%3A%22~jat7%22%2C%22name%22%3A%22Test%22%2C%22circuit%22%3A%7B%22cols%22%3A%5B%5B%22Counting2%22%5D%2C%5B%22%E2%80%A2%22%2C%22%E2%80%A2%22%2C%22X%22%5D%2C%5B%22X%22%2C%22%E2%80%A2%22%5D%2C%5B%22X%22%2C1%2C%22%E2%80%A2%22%5D%5D%7D%7D%2C%7B%22id%22%3A%22~j9or%22%2C%22name%22%3A%22U2%2C2%22%2C%22circuit%22%3A%7B%22cols%22%3A%5B%5B%22X%22%2C%22%E2%80%A2%22%5D%2C%5B%22%E2%80%A2%22%2C%22~6dbg%22%5D%2C%5B%22X%22%2C%22%E2%80%A2%22%5D%5D%7D%7D%2C%7B%22id%22%3A%22~30he%22%2C%22name%22%3A%22U3%2C3%22%2C%22circuit%22%3A%7B%22cols%22%3A%5B%5B%22X%22%2C%22%E2%80%A2%22%5D%2C%5B%22%E2%80%A2%22%2C%22~16b9%22%5D%2C%5B%22X%22%2C%22%E2%80%A2%22%5D%2C%5B%22X%22%2C1%2C%22%E2%80%A2%22%5D%2C%5B%22%E2%80%A2%22%2C%22%E2%80%A2%22%2C%22~qdji%22%5D%2C%5B%22X%22%2C1%2C%22%E2%80%A2%22%5D%2C%5B1%2C%22X%22%2C%22%E2%80%A2%22%5D%2C%5B1%2C%22%E2%80%A2%22%2C%22~6dbg%22%5D%2C%5B1%2C%22X%22%2C%22%E2%80%A2%22%5D%5D%7D%7D%2C%7B%22id%22%3A%22~3t19%22%2C%22name%22%3A%22%CE%B8%5Cn%E2%88%9A20%2F165%22%2C%22circuit%22%3A%7B%22cols%22%3A%5B%5B%7B%22id%22%3A%22Ryft%22%2C%22arg%22%3A%222acos(sqrt(20%2F165))%22%7D%5D%5D%7D%7D%2C%7B%22id%22%3A%22~dh7n%22%2C%22name%22%3A%22%CE%B8%5Cn%E2%88%9A75%2F145%22%2C%22circuit%22%3A%7B%22cols%22%3A%5B%5B%7B%22id%22%3A%22Ryft%22%2C%22arg%22%3A%222acos(sqrt(75%2F145))%22%7D%5D%5D%7D%7D%2C%7B%22id%22%3A%22~cq6s%22%2C%22name%22%3A%22%CE%B8%5Cn%E2%88%9A60%2F70%22%2C%22circuit%22%3A%7B%22cols%22%3A%5B%5B%7B%22id%22%3A%22Ryft%22%2C%22arg%22%3A%222acos(sqrt(60%2F70))%22%7D%5D%5D%7D%7D%2C%7B%22id%22%3A%22~m5i%22%2C%22name%22%3A%22%CE%B8%5Cn%E2%88%9A15%2F55%22%2C%22circuit%22%3A%7B%22cols%22%3A%5B%5B%7B%22id%22%3A%22Ryft%22%2C%22arg%22%3A%222acos(sqrt(15%2F55))%22%7D%5D%5D%7D%7D%2C%7B%22id%22%3A%22~i8ql%22%2C%22name%22%3A%22%CE%B8%5Cn%E2%88%9A30%2F40%22%2C%22circuit%22%3A%7B%22cols%22%3A%5B%5B%7B%22id%22%3A%22Ryft%22%2C%22arg%22%3A%222acos(sqrt(30%2F40))%22%7D%5D%5D%7D%7D%2C%7B%22id%22%3A%22~kcer%22%2C%22name%22%3A%22%CE%B8%5Cn%E2%88%9A6%2F11%22%2C%22circuit%22%3A%7B%22cols%22%3A%5B%5B%7B%22id%22%3A%22Ryft%22%2C%22arg%22%3A%222acos(sqrt(6%2F11))%22%7D%5D%5D%7D%7D%2C%7B%22id%22%3A%22~2bbd%22%2C%22name%22%3A%22%CE%B8%5Cn%E2%88%9A1%2F20%22%2C%22circuit%22%3A%7B%22cols%22%3A%5B%5B%7B%22id%22%3A%22Ryft%22%2C%22arg%22%3A%222acos(sqrt(1%2F20))%22%7D%5D%5D%7D%7D%2C%7B%22id%22%3A%22~e5cj%22%2C%22name%22%3A%22%CE%B8%5Cn%E2%88%9A9%2F19%22%2C%22circuit%22%3A%7B%22cols%22%3A%5B%5B%7B%22id%22%3A%22Ryft%22%2C%22arg%22%3A%222acos(sqrt(9%2F19))%22%7D%5D%5D%7D%7D%2C%7B%22id%22%3A%22~i81f%22%2C%22name%22%3A%22%CE%B8%5Cn%E2%88%9A9%2F10%22%2C%22circuit%22%3A%7B%22cols%22%3A%5B%5B%7B%22id%22%3A%22Ryft%22%2C%22arg%22%3A%222acos(sqrt(9%2F10))%22%7D%5D%5D%7D%7D%2C%7B%22id%22%3A%22~ppan%22%2C%22name%22%3A%22%CE%B8%5Cn%E2%88%9A3%2F15%22%2C%22circuit%22%3A%7B%22cols%22%3A%5B%5B%7B%22id%22%3A%22Ryft%22%2C%22arg%22%3A%222acos(sqrt(3%2F15))%22%7D%5D%5D%7D%7D%2C%7B%22id%22%3A%22~2mea%22%2C%22name%22%3A%22%CE%B8%5Cn%E2%88%9A9%2F12%22%2C%22circuit%22%3A%7B%22cols%22%3A%5B%5B%7B%22id%22%3A%22Ryft%22%2C%22arg%22%3A%222acos(sqrt(9%2F12))%22%7D%5D%5D%7D%7D%2C%7B%22id%22%3A%22~gv9n%22%2C%22name%22%3A%22%CE%B8%5Cn%E2%88%9A3%2F6%22%2C%22circuit%22%3A%7B%22cols%22%3A%5B%5B%7B%22id%22%3A%22Ryft%22%2C%22arg%22%3A%222acos(sqrt(3%2F6))%22%7D%5D%5D%7D%7D%2C%7B%22id%22%3A%22~o2im%22%2C%22name%22%3A%22%CE%B8%5Cn%E2%88%9A1%2F10%22%2C%22circuit%22%3A%7B%22cols%22%3A%5B%5B%7B%22id%22%3A%22Ryft%22%2C%22arg%22%3A%222acos(sqrt(1%2F10))%22%7D%5D%5D%7D%7D%2C%7B%22id%22%3A%22~9u45%22%2C%22name%22%3A%22%CE%B8%5Cn%E2%88%9A6%2F9%22%2C%22circuit%22%3A%7B%22cols%22%3A%5B%5B%7B%22id%22%3A%22Ryft%22%2C%22arg%22%3A%222acos(sqrt(6%2F9))%22%7D%5D%5D%7D%7D%2C%7B%22id%22%3A%22~v2k1%22%2C%22name%22%3A%22%CE%B8%5Cn%E2%88%9A3%2F10%22%2C%22circuit%22%3A%7B%22cols%22%3A%5B%5B%7B%22id%22%3A%22Ryft%22%2C%22arg%22%3A%222acos(sqrt(3%2F10))%22%7D%5D%5D%7D%7D%2C%7B%22id%22%3A%22~4npg%22%2C%22name%22%3A%22%CE%B8%5Cn%E2%88%9A6%2F7%22%2C%22circuit%22%3A%7B%22cols%22%3A%5B%5B%7B%22id%22%3A%22Ryft%22%2C%22arg%22%3A%222acos(sqrt(6%2F7))%22%7D%5D%5D%7D%7D%2C%7B%22id%22%3A%22~l3k5%22%2C%22name%22%3A%22%CE%B8%5Cn%E2%88%9A3%2F5%22%2C%22circuit%22%3A%7B%22cols%22%3A%5B%5B%7B%22id%22%3A%22Ryft%22%2C%22arg%22%3A%222acos(sqrt(3%2F5))%22%7D%5D%5D%7D%7D%5D%7D
}{$\wdb{5}{2}{3}$},
which are both used in the construction of the Dicke state $\dicke{11}{3}$ given in Figure~\ref{fig:our-method}.

Secondly, we would like to have an implementation for $\wdb{n}{m}{k}$ that also works on LNN topologies along the $k+k$ qubits with only a constant factor overhead. 
This is possible: We can spend an additional $\bigO(k)$ depth and $\bigO(k^2)$ gates on a SWAP network which for the circuits in Steps~\eqref{step2}~\&~\eqref{step4}
keeps the controls of one register in direct neighborhood to the controlled gates in the second register. This still allows for parallelization (albeit by a constant factor less).
We get:

\begin{lemma}
	The weight distribution block $\wdb{n}{m}{k}$ can be implemented with a circuit of depth $\bigO(k)$ on LNN connectivities on and between 
	$\min(n-m,k)+\min(m,k)$ qubits, with $\bigO(k^2)$ 2-qubit gates and without ancilla qubits.
	\label{lem:wdb-LNN}
\end{lemma}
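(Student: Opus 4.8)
The plan is to prove the lemma in three stages: first verify that the four-step construction of Figure~\ref{fig:wdb} realizes the map of Definition~\ref{def:wdb}, then read off the depth, gate count, and ancilla-freeness on an unrestricted connectivity (including the truncation that yields the stated qubit count), and finally port everything to LNN via a SWAP network.

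For correctness I would track the joint state of the two working registers through \eqref{step1}--\eqref{step4}. Steps \eqref{step1} and \eqref{step3} are mutually inverse encoding changes between the unary $\ket{0^{k-\ell}1^\ell}$ and the one-hot $\ket{0^{k-\ell}1^1 0^{\ell-1}}$, each realizable by a short $\CNOT$ ladder. The substantive step is \eqref{step2}: conditioned on the one-hot marker, which records $\ell$, I apply the stair of controlled $\ryangle{x_i}{s_i}$ rotations of Figure~\ref{fig:add-subtract}~(left), where $x_i = \tbinom{m}{i}\tbinom{n-m}{\ell-i}$ and $s_i = \sum_{j\geq i} x_j$ are the suffix sums. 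I would show by telescoping that this cascade deposits amplitude $\sqrt{x_i/s_0}$ on $\ket{0^{k-i}1^i}$, and that $s_0 = \sum_{i=0}^{\ell} \tbinom{m}{i}\tbinom{n-m}{\ell-i} = \tbinom{n}{\ell}$ by Vandermonde's identity, giving exactly the normalization $1/\surd\tbinom{n}{\ell}$ of Definition~\ref{def:wdb}. Step \eqref{step4} is then a unary subtraction mapping $\ell \mapsto \ell-i$ in the second register, controlled on the unary value $i$ held in the first; composing the four maps yields the target.

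Next I would account for the resources. The encoding changes \eqref{step1},\eqref{step3} cost $\bigO(k)$ depth and gates. In \eqref{step2} each value of $\ell$ contributes $\bigO(\ell)$ controlled rotations, hence $\bigO(k^2)$ gates overall, but the stair shape parallelizes them into depth $\bigO(k)$; the controlled permutations of \eqref{step4}, decomposed into a stair of Fredkin (hence Toffoli plus $\CNOT$) gates, admit the same $\bigO(k)$-depth, $\bigO(k^2)$-gate bound, and no qubits beyond the two registers are introduced. To obtain the qubit count I would invoke the truncation already noted after Definition~\ref{def:wdb}: whenever $i>m$ the factor $\tbinom{m}{i}$ vanishes, so every gate touching the top $k-m$ qubits of the $m$-side register is an identity or a $0$-controlled operation and may be deleted, shrinking that register to $\min(m,k)$ qubits and, symmetrically, the other to $\min(n-m,k)$, for a total of $\min(n-m,k)+\min(m,k)$.

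Finally, for LNN I would interleave a SWAP network into \eqref{step2} and \eqref{step4}. This is the main obstacle: unlike the correctness and gate-count bookkeeping, which follow mechanically from the figures, each controlled gate couples a qubit of one register to one of the other, and on a line these are generically far apart, so I must route the controls so that the stair structure -- and hence the parallelization -- is not serialized. I would exhibit an explicit schedule of nearest-neighbor SWAPs on the $2k$-qubit line that shepherds each active control alongside the advancing stair front, keeping every controlled gate local while degrading the parallel depth by at most a constant factor, at a cost of $\bigO(k)$ additional depth and $\bigO(k^2)$ SWAP gates. Combining the three stages gives the claimed $\bigO(k)$ depth, $\bigO(k^2)$ two-qubit gates, and ancilla-free LNN implementation.
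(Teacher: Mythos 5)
Your proposal is correct and follows essentially the same route as the paper: the four-step circuit of Figure~\ref{fig:wdb} (unary-to-one-hot, stair-shaped controlled addition, one-hot-to-unary, unary subtraction), the same $\bigO(k)$-depth/$\bigO(k^2)$-gate accounting via parallelization of the stairs, the same truncation of the second register to $\min(m,k)$ qubits using the vanishing binomials, and the same interleaved SWAP network for LNN. Your explicit verification of the normalization via Vandermonde's identity, $\sum_{i=0}^{\ell}\tbinom{m}{i}\tbinom{n-m}{\ell-i}=\tbinom{n}{\ell}$, is a welcome detail the paper leaves implicit.
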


\begin{figure*}[t]
	\centering
	\includegraphics[width=\textwidth]{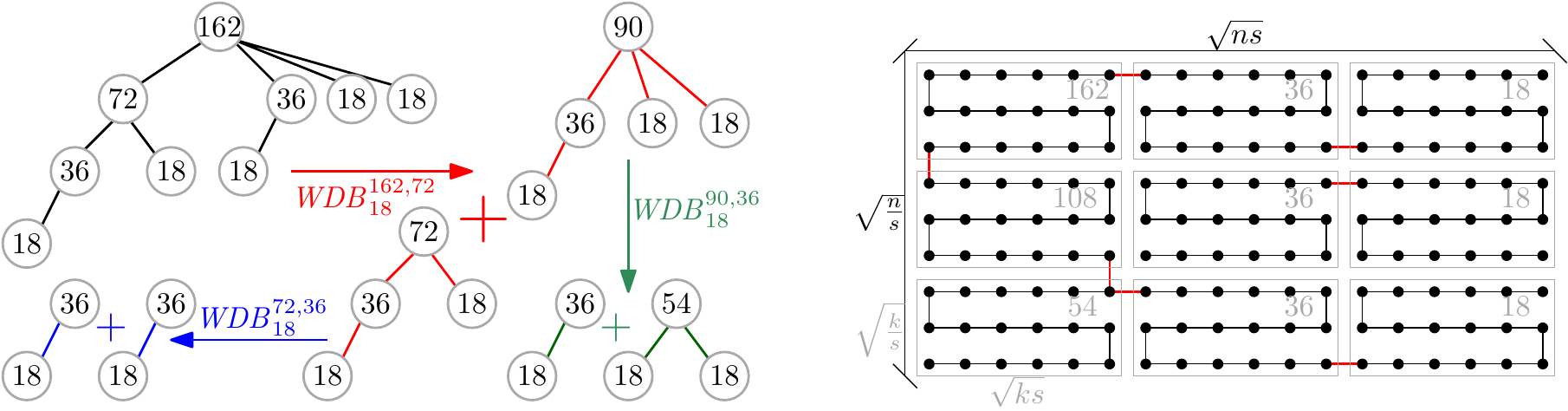}
	\caption{Structures for the recursive application of weight distribution blocks $\wdb{n}{m}{k}$, illustrated for a Dicke state \dicke{162}{18}:
		(LEFT) For all-to-all connectivity, we distribute the $n$ qubits into $\tfrac{n}{k}$ disjoint sets of size $k$.
		These sets are arranged in a rooted tree and labelled by their \emph{tier}, which denotes the total number of all qubits in their subtree.
		We construct the tree using methods from Union-Find data structures, such that recursive applications of weight distribution blocks $\wdb{x}{y}{k}$ 
		between a root of tier $x$ and its highest-tiered child of tier $y$ splits the tree into subtrees with roots of tier $x,x-y$ with roughly equal degree.\newline
		(RIGHT) For grids of size $\surd\tfrac{n}{s} \times \sqrt{ns}$ with $1\leq s \leq k$, we divide the qubits into $\surd\tfrac{n}{k} \times \surd \tfrac{n}{k}$ 
		many rectangles of size $\surd{\tfrac{k}{s}} \times \sqrt{ks}$. We identify LNN connectivities among the $k$ qubits in each rectangle, such that LNN endpoints
		in adjacent rectangles are adjacent grid points.
		The tier of a rectangle corresponds to the total number of qubits in their lower right quadrant (for rectangles in the leftmost column)
		or to the total number of qubits in their row and to their the right (for all other rectangles). 
		We use weight distribution blocks $\wdb{x}{x-\surd nk}{k}$ to distribute an input weight top-to-bottom in the first column,
		followed by blocks $\wdb{y}{y-k}{k}$ for left-to-right distribution in each row, where the rows can be processed in parallel.
	}
	\label{fig:topologies}
\end{figure*}

Remark that in Lemma 2, we did not specify in which order the LNN connectivities of 
the two registers are ``glued'' together. In fact, both options are fine, as we can 
rearrange the qubits in the first register into a reverse order with a SWAP network of
$\bigO(k)$ depths with $\bigO(k^2)$ SWAP gates. 

%% Recursive Structures
\section{Topology-Dependent Recursive Structure}
\label{sec:recursive}

In this last technical section, we design recursive structures for the application of our weight distribution blocks. 
For all-to-all connectivities, we are not constrained in the design: any recursive structure inherits the all-to-all connectivity between the two sets of $k$ qubits. 
For grid topologies, we have to be more careful: the recursive structure should respect the connectivity requirements of Lemma~\ref{lem:wdb-LNN}.
Examples of our constructions are presented in Figure~\ref{fig:topologies}.

% All-to-all
\subsection{All-to-all connectivity}

To obtain a recursive structure on all-to-all connectivities, we partition the $n$ qubits into $\lfloor \tfrac{n}{k} \rfloor$ disjoint sets of size $k$, plus possibly a set 
of $n \pmod k$ remaining qubits. To each set we assign a \emph{tier} which initially denotes the number of qubits in the set.
We then use a union-by-size (i.e., tier) algorithm (without path compression) for disjoint sets data structures, to build a rooted tree that 
(i) has the qubit sets as its nodes and (ii) has nodes labelled by tiers that correspond to the total number of qubits in all sets of the respective node's subtree, 
see Figure~\ref{fig:topologies}~(left):

Until every set is contained in the same single tree, we take the two trees with lowest-tier roots $a,b$, 
attach the lower-tier root $a$ as a child to the other root $b$ and update $b$'s tier with the sum of both tiers.
The resulting tree has only logarithmic height in the number of nodes, $\bigO(\log \tfrac{n}{k})$.%
\footnote{An alternative ``top-down'' approach would be to start with a designated root, and over several rounds $r$ connect to each of the $2^r$ sets in the tree 
a not-yet connected set, and computing the tiers once the tree is finalized.}

We can now use weight distribution blocks to recursively ``undo'' the union-by-size algorithm: For a root of tier $x$ and its highest-tier child of tier $y$, 
we apply $\wdb{x}{y}{k}$ to the root and its child, cut their edge, and get two smaller trees with root tiers $x-y$ and $y$, respectively. The next iteration 
of weight distribution blocks can now be applied to these two roots (and their respective highest-tier children) in parallel. Due to the logarithmic height
of the tree, after at most $\bigO(\log \tfrac{n}{k})$ recursion steps we have distributed the input Hamming weight across all qubit sets.
This leads to our first main result:

\begin{theorem}[All-to-all connectivity]
	The Dicke state unitary $\dsu{n}{k}$ can be implemented with a circuit of depth $\bigO(k \log \tfrac{n}{k})$ on all-to-all connectivities, 
	with $\bigO(k n)$ 2-qubit gates in total and without ancilla qubits.
	\label{thm:dsu-full}
\end{theorem}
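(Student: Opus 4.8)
The plan is to realize $\dsu{n}{k}$ as a \emph{divide} phase built from weight distribution blocks (Lemma~\ref{lem:wdb-LNN}) followed by a \emph{conquer} phase of parallel small Dicke-state unitaries (Lemma~\ref{lem:dsu-LNN}), arranged along the union-by-size tree of Figure~\ref{fig:topologies}~(left). First I would fix the combinatorial scaffold: partition the $n$ qubits into $t=\lceil n/k\rceil$ sets, all of size $k$ except possibly one remainder set, give each an initial tier equal to its size, and run the merge-two-smallest-roots procedure to obtain a single rooted tree whose node tiers equal subtree qubit-counts. Because the input $\ket{0^{n-\ell}1^\ell}$ of $\dsu{n}{k}$ (for $\ell\le k$) is supported on the $k$ least-significant qubits, I place the root set there, so the unary weight $\ell$ starts at the root.

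Next I would describe the divide phase as a sequence of parallel rounds: in each round every current root of tier $x$ applies $\wdb{x}{y}{k}$ to itself and its highest-tier child of tier $y$ and cuts that edge, producing roots of tiers $y$ and $x-y$; since the blocks in a round touch disjoint qubit pairs they run in parallel. The correctness step is to show by induction on the tree that, after all edges are cut, a weight assignment $(\ell_1,\dots,\ell_t)$ with $\sum_j\ell_j=\ell$ over the sets $S_1,\dots,S_t$ carries amplitude $\binom{n}{\ell}^{-1/2}\prod_j\binom{|S_j|}{\ell_j}^{1/2}$. The induction is clean: the factor $\sqrt{\binom{y}{i}\binom{x-y}{\ell-i}}$ emitted by a block exactly cancels the normalizations $\binom{y}{i}^{-1/2}\binom{x-y}{\ell-i}^{-1/2}$ of the two recursive subtrees it spawns, while Vandermonde's identity $\binom{x}{\ell}=\sum_i\binom{y}{i}\binom{x-y}{\ell-i}$ keeps the state normalized at each level, leaving the global factor $\binom{n}{\ell}^{-1/2}\prod_j\binom{|S_j|}{\ell_j}^{1/2}$.

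The conquer phase then applies $\bigotimes_j\dsu{|S_j|}{|S_j|}$ in parallel. Since $\binom{|S_j|}{\ell_j}^{1/2}\dsu{|S_j|}{|S_j|}\ket{0^{|S_j|-\ell_j}1^{\ell_j}}=\sum_{\mathrm{HW}(x)=\ell_j}\ket{x}$, multiplying over $j$ and summing over assignments collapses the state to $\binom{n}{\ell}^{-1/2}\sum_{\mathrm{HW}(x)=\ell}\ket{x}=\dicke{n}{\ell}$ for every $\ell\le k$, which is exactly Definition~\ref{def:dsu}. For the costs: the number of blocks equals the number of tree edges $t-1=\bigO(n/k)$, each of $\bigO(k^2)$ gates, and the conquer layer adds $t$ unitaries of $\bigO(k^2)$ gates, for $\bigO(kn)$ gates total and no ancillas; each round has depth $\bigO(k)$ (one block, with free routing on all-to-all) and the conquer layer adds $\bigO(k)$, so the depth is $\bigO(k)$ times the number of rounds.

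The hard part will be bounding the number of rounds by $\bigO(\log\tfrac{n}{k})$. Because a root sheds only one child per round, the round in which a node becomes isolated equals the sum, along its root-to-node path, of its rank (by decreasing tier) among its siblings, and I must bound this sum for every node. The enabling structural fact is that with uniform initial tiers the merge-two-smallest rule builds a binomial-heap-type tree: the children of any node have tiers forming a doubling sequence $k,2k,4k,\dots$, so the tree decomposes into perfect binomial trees $B_m$ with $m=\bigO(\log\tfrac{n}{k})$. For such trees the maximal round index obeys the recurrence $M(m)=\max_{0\le j<m}\bigl[(m-j)+M(j)\bigr]=m$, i.e.\ exactly $\log_2\tfrac{n}{k}$ rounds, and the remainder set and a non-power-of-two $t$ perturb this only in lower-order terms by the same geometric-decrease argument. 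Multiplying $\bigO(\log\tfrac{n}{k})$ rounds by the $\bigO(k)$ per-round depth gives the claimed $\bigO(k\log\tfrac{n}{k})$ depth.
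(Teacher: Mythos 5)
Your proposal is correct and takes essentially the same route as the paper's proof: the same union-by-size tree over $\lceil n/k\rceil$ sets of $k$ qubits, the same divide phase in which each root peels off its highest-tier child with a weight distribution block (Lemma~\ref{lem:wdb-LNN}), the same parallel conquer layer of small Dicke state unitaries (Lemma~\ref{lem:dsu-LNN}), and identical gate/depth accounting. The only difference is that you make explicit two steps the paper treats as immediate --- the Vandermonde amplitude bookkeeping and the binomial-tree recurrence $M(m)=\max_j[(m-j)+M(j)]=m$ bounding the number of rounds --- and the latter is a welcome tightening, since a bare appeal to logarithmic tree height would not by itself bound the round count for a root that sheds only one child per round.
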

\begin{proof}
	We first follow the scheme outlined above to distribute Hamming weights $\ell < k$ across sets of at most $k$ qubits using weight distribution blocks.
	According to Lemma~\ref{lem:wdb-LNN}, such blocks can be implemented in $\bigO(k)$ depth, with $\bigO(k^2)$ gates and no ancilla qubits. 
	Every set of qubits except for the root is exactly once a target of a weight distribution block, hence all blocks together sum up to $\bigO(\tfrac{n}{k} k^2) = \bigO(kn)$ many gates.
	On the other hand, we have at most a logarithmic number of recursion steps, yielding an overall depth of $\bigO(k \log \tfrac{n}{k})$. 

	Finally, we use Dicke state unitaries $\dsu{k}{k}$ (and $\dsu{n\bmod k}{n\bmod k}$ if $k$ does not divide $n$) in parallel, to distribute Hamming weights inside each set.
	According to Lemma~\ref{lem:dsu-LNN}, this adds another gate count of $\bigO(\tfrac{n}{k} k^2) = \bigO(kn)$ gates with depth $\bigO(k)$, bringing the total counts 
	to the values given in the Theorem.
\end{proof}

We note in passing the similarity to the minimum broadcast time problem~\cite{minimumbroadcasttime}, 
where a message from a (to be chosen) sender should be broadcast to all nodes of the graph in minimum time with the restriction that 
nodes can send copies of the message to at most one of their neighbors in any given time step. 

If the graph is part of the input, the minimum broadcast time problem is NP hard.
If, however, the graph is known to be a clique, then an adaptation of the scheme above achieves the best possible broadcast time.
We thus conjecture that for constant $k$, the depth shown in Theorem~\ref{thm:dsu-full} is optimal up to constant factors.

% Grid 
\subsection{Grid connectivity}

We now consider any grid connectivity of grid size $\Omega(\surd \tfrac{n}{s}) \times \bigO( \sqrt{ns})$ with $1\leq s \leq k$. 
To start with, we assume that the values $\surd\tfrac{n}{s}, \surd\tfrac{k}{s}, \surd\tfrac{n}{k}$ as well as $\sqrt{ns}, \sqrt{ks}, \sqrt{nk}$ are all integers.
In this case, we can divide a rectangular grid of size $\surd \tfrac{n}{s} \times  \sqrt{n}s$ into $\surd\tfrac{n}{k} \times \surd \tfrac{n}{k}$ many rectangles of size
$\surd{\tfrac{k}{s}} \times \sqrt{ks}$, containing $k$ qubits each. In each rectangle, we single out a snake-like LNN subtopology such that the snake's ``head'' and ``tail''
are adjacent to either the head or the tail of a snake in each neighboring rectangle, see Figure~\ref{fig:topologies}~(right). 

Now we can again assign tiers to the rectangles, similar to the previous subsection:
In each column except for the leftmost, the tier of a rectangle corresponds to the total number of qubits in that rectangle and all rectangles to its right.
In the leftmost column, the tier of a rectangle corresponds to the total number of qubits in that rectangle and all rectangles in its lower right quadrant.

We use weight distribution blocks $\wdb{x}{x-\sqrt{nk}}{k}$ to distribute an input Hamming weight from the top left rectangle top-to-bottom in the leftmost column,
followed by weight distribution blocks $\wdb{y}{y-k}{k}$ for left-to-right distribution in each row, where the rows can be processed in parallel.
This leads to our second main result:

\begin{theorem}[Grid connectivity]
	The Dicke state unitary $\dsu{n}{k}$ can be implemented with a circuit of depth $\bigO(k \surd{\tfrac{n}{k}})$ on grid topologies
	of size $\Omega(\surd \tfrac{n}{s}) \times \bigO( \sqrt{ns})$, where $1\leq s \leq k$, 
	with $\bigO(k n)$ 2-qubit gates in total and without ancilla qubits.
	\label{thm:dsu-grid}
\end{theorem}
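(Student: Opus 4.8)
The plan is to reuse the three-phase template of Theorem~\ref{thm:dsu-full} -- a divide phase of weight distribution blocks followed by a conquer phase of parallel Dicke state unitaries -- and to port it onto the rectangle decomposition of Figure~\ref{fig:topologies}~(right). Correctness carries over almost verbatim: by Definition~\ref{def:wdb} each $\wdb{\cdot}{\cdot}{k}$ splits a unary weight $\ell$ into the binomially-weighted superposition of sub-weights $i$ and $\ell-i$ on its two sets, so composing the blocks along the recursive structure telescopes the binomial coefficients, and the final layer of $\dsu{k}{k}$ unitaries symmetrizes the unary weight inside each rectangle. I would phrase this as an invariant -- after processing a subtree of rectangles, the joint state is the correctly-weighted Dicke-basis distribution of the weight over those rectangles -- and conclude that on input $\ket{0^{n-\ell}1^\ell}$ the circuit outputs $\dicke{n}{\ell}$ for every $\ell\le k$, referring back to the all-to-all argument for the identical algebra.

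The technical core is the connectivity bookkeeping needed to invoke Lemma~\ref{lem:wdb-LNN}. Since $\wdb{\cdot}{\cdot}{k}$ acts trivially outside the first $k$ qubits of each of its two sets, every block I use -- whether it splits off one full row ($\sqrt{nk}$ qubits) in the leftmost column or one rectangle ($k$ qubits) inside a row -- really only touches the $k+k$ qubits of two \emph{adjacent} rectangles. The step I must justify is that a snake-shaped Hamiltonian path can be chosen inside each rectangle, with a consistent orientation of heads and tails across the whole tiling, so that for every horizontally and every vertically adjacent pair of rectangles an endpoint of one snake is a grid-neighbor of an endpoint of the other. Concatenating the two snakes then yields a single length-$2k$ LNN path, which is exactly the connectivity Lemma~\ref{lem:wdb-LNN} requires; the remark following that lemma lets me reverse either register, so I am free to glue the paths in whichever orientation places the weight-carrying rectangle at the junction.

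With connectivity in hand, the depth and gate bounds are a counting exercise. Writing $R=\sqrt{n/k}$ for the number of rectangles per row and per column: the column phase is an unavoidably sequential chain of $R-1$ blocks $\wdb{x}{x-\sqrt{nk}}{k}$ -- each row's total weight must be fixed before the next row is split off, and, unlike the all-to-all tree, binary splitting is barred because it would connect non-adjacent rectangles -- contributing depth $(R-1)\cdot\bigO(k)=\bigO(k\sqrt{n/k})$. The row phase consists of $R-1$ sequential blocks $\wdb{y}{y-k}{k}$ per row, but all $R$ rows act on disjoint qubits and run in parallel, so it also costs $\bigO(k\sqrt{n/k})$; the conquer phase of parallel $\dsu{k}{k}$ adds $\bigO(k)$ by Lemma~\ref{lem:dsu-LNN}. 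Summing yields depth $\bigO(k\sqrt{n/k})=\bigO(\sqrt{nk})$. For the gate count there are $R-1+R(R-1)=\tfrac{n}{k}-1$ blocks in total, each with $\bigO(k^2)$ gates by Lemma~\ref{lem:wdb-LNN}, for $\bigO(\tfrac{n}{k}\cdot k^2)=\bigO(kn)$; the $\tfrac{n}{k}$ parallel $\dsu{k}{k}$ unitaries add another $\bigO(kn)$ by Lemma~\ref{lem:dsu-LNN}, and no phase uses ancillas.

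Two loose ends remain. First, I would discharge the integrality assumptions stated at the start of the subsection by rounding each rectangle side up to the nearest integer and padding the grid to the nearest admissible shape; this perturbs every count by at most a constant factor, absorbed by the slack in the $\Omega(\sqrt{n/s})\times\bigO(\sqrt{ns})$ dimensions and by the constant-overhead ``grid-like'' convention. Second, I would confirm the big-endian qubit ordering used by the blocks is compatible with the physical snake, so that ``the first $k$ qubits of a set'' always lands on the intended rectangle. I expect the main obstacle to be exactly the global snake-orientation argument of the second paragraph: exhibiting a single labelling of every rectangle's $k$ qubits that simultaneously satisfies the vertical adjacencies used by the column phase and the horizontal adjacencies used by every row phase, for all admissible aspect ratios $1\le s\le k$ (from near-square rectangles at $s=1$ to thin $1\times k$ strips at $s=k$).
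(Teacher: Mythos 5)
Your divide-and-conquer skeleton -- a sequential column phase of blocks $\wdb{x}{x-\sqrt{nk}}{k}$, a row phase of blocks $\wdb{y}{y-k}{k}$ parallelized across rows, and a final layer of $\dsu{k}{k}$ unitaries -- together with the depth and gate counting is exactly the paper's proof under the integrality assumption, and your observations that the column phase cannot be binary-split (adjacency forbids it) and that the remark after Lemma~\ref{lem:wdb-LNN} resolves the snake-gluing orientation are both on point.

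The genuine gap is your first ``loose end.'' Rounding the rectangle sides up and padding the grid is precisely the straight-forward adaptation the paper rejects: the padding qubits do not sit harmlessly at the boundary but end up \emph{inside} rectangles, where they break the snake-shaped LNN paths and block the head-to-tail adjacencies between neighboring rectangles; restoring connectivity then requires swapping these unused qubits out of the way, which is exactly the kind of ancilla management the theorem's ``no ancilla'' claim forbids. Your appeal to the slack in the $\Omega(\surd\tfrac{n}{s})\times\bigO(\sqrt{ns})$ dimensions does not address this, because the problem is local to each rectangle, not a global area deficit. The paper's full proof instead changes the decomposition: it keeps $k$-qubit snakes only in the leftmost column and replaces each remaining row of rectangles by a \emph{single} snake of $\Theta(\sqrt{nk})$ qubits, then applies one Dicke state unitary $\dsu{\bigO(\sqrt{nk})}{k}$ per row (depth $\bigO(\sqrt{nk})$ by Lemma~\ref{lem:dsu-LNN}, rows in parallel) in place of your entire row phase plus conquer phase. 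This absorbs all rounding slack into the length of one flexible snake per row and is also the improvement the paper itself recommends even in the integer case; without it, or some equally careful placement of the leftover qubits, your proof covers only the case where $\surd\tfrac{k}{s}$, $\sqrt{ks}$, and $\surd\tfrac{n}{k}$ are all integers.
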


\begin{proof}[Proof (assuming integer square root values)]
	The first part of our scheme distributes an input Hamming weight $\ell \leq k$ from the top-leftmost rectangle top-to-bottom in the leftmost column. 
	For this we need $\surd\tfrac{n}{k}$ many consecutive weight distribution blocks of the form $\wdb{x}{x-\sqrt{nk}}{k}$, which -- using Lemma~\ref{lem:wdb-LNN} --
	leads to a circuit depth of $\bigO(k \surd \tfrac{n}{k})$ using $\bigO(k^2 \surd \tfrac{n}{k}) \subseteq \bigO(k n)$ many gates and no ancilla qubits.

	The second part of our scheme further distributes weights consecutively from left-to-right in each row using weight distribution blocks of the form $\wdb{y}{y-k}{k}$, parallelizing across rows. 
	This leads to a circuit depth of $\bigO(k \surd \tfrac{n}{k})$ using $\bigO(k^2 \tfrac{n}{k}) = \bigO(k n)$ many gates and no ancilla qubits. 

	The final part again uses Dicke state unitaries $\dsu{k}{k}$ in parallel, not exceeding the previous depth or gate counts, yielding the values in the Theorem. 
\end{proof}

We note that this scheme leaves room for improvement. For one, we could start with the input Hamming weight in a central rectangle, 
from where it would be distributed first along its column and then to the half-rows on the left and on the right, decreasing the overall depth by a factor of 4. 
More importantly, instead of using consecutive applications of $\wdb{y}{y-k}{k}$ along the rows followed by Dicke state unitaries $\dsu{k}{k}$, 
we could directly apply Dicke state unitaries $\dsu{\sqrt{nk}}{k}$ to symmetrically distribute the Hamming weight across all qubits in the $\surd\tfrac{n}{k}$ rectangles of size $k$.
This strategy also allows us to finally deal with non-integer square root values.

%% D8,2 Preparation full connectivity
\renewcommand{\ryangle}[4]{\underset{\scriptstyle\smash{\surd#1/#2}}{#4\theta#3}}
\renewcommand{\rygate}[4]{\gate{\ryangle{#1}{#2}{#3}{#4}}}
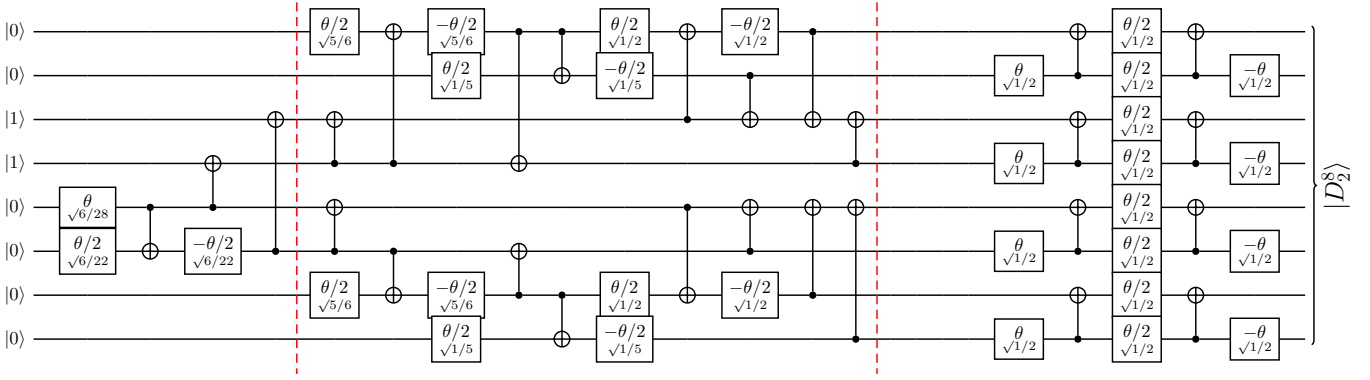
\begin{figure*}[t!]
	\centering
	\begin{adjustbox}{width=\linewidth}
	%% Grover Mixer QAOA
	%% -----------------
	%\hspace*{-0.5cm}
		\begin{quantikz}[row sep={24pt,between origins},execute at end picture={
			}]
			\lstick{\ket{0}}	& \qw			& \qw		& \qw			& \qw\slice{}	& \rygate{5}{6}{/2}{}	& \targ{}	& \rygate{5}{6}{/2}{-}	& \ctrl{3}	& \ctrl{1}	& \rygate{1}{2}{/2}{}	& \targ{}	& \rygate{1}{2}{/2}{-}	& \ctrl{2}	& \qw\slice{}	& \qw	& \qw	& \qw	& \qw	& \qw			& \targ{}	& \rygate{1}{2}{/2}{}	& \targ{}	& \qw			& \qw\rstick[8]{\rotatebox{90}{\Large \dicke{8}{2}}}			\\
			\lstick{\ket{0}}	& \qw			& \qw		& \qw			& \qw		& \qw			& \qw		& \rygate{1}{5}{/2}{}	& \qw		& \targ{}	& \rygate{1}{5}{/2}{-}	& \qw		& \ctrl{1}		& \qw		& \qw		& \qw	& \qw	& \qw	& \qw	& \rygate{1}{2}{}{}	& \ctrl{-1}	& \rygate{1}{2}{/2}{}	& \ctrl{-1}	& \rygate{1}{2}{}{-}	& \qw		\\
			\lstick{\ket{1}}	& \qw			& \qw		& \qw			& \targ{}	& \targ{}		& \qw		& \qw			& \qw		& \qw		& \qw			& \ctrl{-2}	& \targ{}		& \targ{}	& \targ{}	& \qw	& \qw	& \qw	& \qw	& \qw			& \targ{}	& \rygate{1}{2}{/2}{}	& \targ{}	& \qw			& \qw		\\
			\lstick{\ket{1}}	& \qw			& \qw		& \targ{}		& \qw		& \ctrl{-1}		& \ctrl{-3}	& \qw			& \targ{}	& \qw		& \qw			& \qw		& \qw			& \qw		& \ctrl{-1}	& \qw	& \qw	& \qw	& \qw	& \rygate{1}{2}{}{}	& \ctrl{-1}	& \rygate{1}{2}{/2}{}	& \ctrl{-1}	& \rygate{1}{2}{}{-}	& \qw		\\
			\lstick{\ket{0}}	& \rygate{6}{28}{}{}	& \ctrl{1}	& \ctrl{-1}		& \qw		& \targ{}		& \qw		& \qw			& \qw		& \qw		& \qw			& \ctrl{2}	& \targ{}		& \targ{}	& \targ{}	& \qw	& \qw	& \qw	& \qw	& \qw			& \targ{}	& \rygate{1}{2}{/2}{}	& \targ{}	& \qw			& \qw		\\
			\lstick{\ket{0}}	& \rygate{6}{22}{/2}{}	& \targ{}	& \rygate{6}{22}{/2}{-}	& \ctrl{-3}	& \ctrl{-1}		& \ctrl{1}	& \qw			& \targ{}	& \qw		& \qw			& \qw		& \ctrl{-1}		& \qw		& \qw		& \qw	& \qw	& \qw	& \qw	& \rygate{1}{2}{}{}	& \ctrl{-1}	& \rygate{1}{2}{/2}{}	& \ctrl{-1}	& \rygate{1}{2}{}{-}	& \qw		\\
			\lstick{\ket{0}}	& \qw			& \qw		& \qw			& \qw		& \rygate{5}{6}{/2}{}	& \targ{}	& \rygate{5}{6}{/2}{-}	& \ctrl{-1}	& \ctrl{1}	& \rygate{1}{2}{/2}{}	& \targ{}	& \rygate{1}{2}{/2}{-}	& \ctrl{-2}	& \qw		& \qw	& \qw	& \qw	& \qw	& \qw			& \targ{}	& \rygate{1}{2}{/2}{}	& \targ{}	& \qw			& \qw		\\	
			\lstick{\ket{0}}	& \qw			& \qw		& \qw			& \qw		& \qw			& \qw		& \rygate{1}{5}{/2}{}	& \qw		& \targ{}	& \rygate{1}{5}{/2}{-}	& \qw		& \qw			& \qw		& \ctrl{-3}	& \qw	& \qw	& \qw	& \qw	& \rygate{1}{2}{}{}	& \ctrl{-1}	& \rygate{1}{2}{/2}{}	& \ctrl{-1}	& \rygate{1}{2}{}{-}	& \qw							
		\end{quantikz}
	\end{adjustbox}
	\caption{Preparation of a Dicke state \dicke{8}{2} on an all-to-all connectivity: The leftmost weight distribution block $\wdb{8}{4}{2}$ accepts only input Hamming weight $2$.
		It is followed by two recursive weight distribution blocks $\wdb{4}{2}{2}$, and finally by four Dicke state unitaries $\dsu{2}{2}$. The total CNOT count is 27. 
		This improves on previous work with an explicit CNOT gate count of 44~\cite{mukherjee2020preparing}.
	}
	\label{fig:D82-full}
\end{figure*}

\begin{proof}[Full proof of Theorem~\ref{thm:dsu-grid}]
	The main issue with non-integer square root values in the construction behind the previous proof is that a straight-forward adaptation
	can lead to rectangles which contain several unused qubits. While these are not ancilla qubits in the sense that we use them in our circuit design, they can 
	block qubits form interacting in our circuits, and thus need to be swapped out of the way, thus this adds unnecessary ancilla qubits that we have to take care of.

	There is an elegant way to deal with this based on the observation above: We only construct snake-shaped sets of $k$ qubits occupying (parts) of rectangles of size
	$\Omega(\surd\tfrac{k}{s}) \times \bigO(\sqrt{ks})$ \emph{in the leftmost column}. Instead of the other columns of rectangles, we use $\surd\tfrac{n}{k}$ many 
	snake-shaped sets of $\Theta(\sqrt{nk})$ many qubits, which each occupy the space of a row and attach to the heads or tails of the snake in the corresponding rectangle in the leftmost column. 
	
	Then our scheme first distributes an input Hamming weight $\ell \leq k$ from the top rectangle down to other rectangles in the leftmost column,
	with the same gate count as before. 
	This is immediately followed by $\surd \tfrac{n}{k}$ many Dicke state unitaries $\dsu{\bigO(\sqrt{nk})}{k}$ on each pair of (i) a short snake of qubits in a rectangle in the leftmost column and (ii) the long snake of qubits in the row to its right. 
	(In case the snakes are glued together in the wrong order, we first use a SWAP network on the short snake to reverse the order of its qubits.)
	The rows can all be executed in parallel, yielding a circuit depth of $\bigO(\sqrt{nk}) = \bigO(k \surd\tfrac{n}{k})$ using $\bigO(\surd \tfrac{n}{k} k \sqrt{nk}) = \bigO(kn)$ gates in total,
	without the need for (unnecessary) ancilla qubits.
\end{proof}

We remark that for grid topologies of size $\Theta(\surd{\tfrac{n}{k}}) \times \Theta(\sqrt{nk})$, our circuit depth asymptotically corresponds to the graph diameter of the grid topology. We thus conjecture for these grids that our circuit depths are optimal up to constant factors. 
Techniques such as teleportation using mid-circuit measurements and classical feed-forward controls to distant qubits could, however, be used to circumvent these connectivity restrictions, although one might argue that these techniques effectively decrease the graph diameter itself.

%% D8,2 Preparation grid connectivity
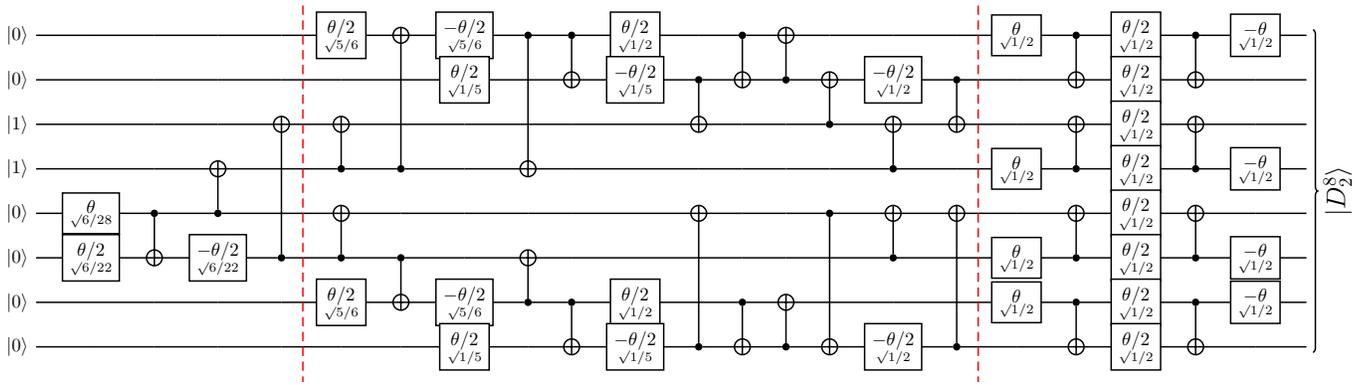
\begin{figure*}[t!]
	\centering
	\begin{adjustbox}{width=\linewidth}
	%% Grover Mixer QAOA
	%% -----------------
	%\hspace*{-0.5cm}
		\begin{quantikz}[row sep={24pt,between origins},execute at end picture={
			}]
			\lstick{\ket{0}}	& \qw			& \qw		& \qw			& \qw\slice{}	& \rygate{5}{6}{/2}{}	& \targ{}	& \rygate{5}{6}{/2}{-}	& \ctrl{3}	& \ctrl{1}	& \rygate{1}{2}{/2}{}	& \qw		& \ctrl{1}	& \targ{}	& \qw		& \qw			& \qw\slice{}		& \rygate{1}{2}{}{}	& \ctrl{1}	& \rygate{1}{2}{/2}{}	& \ctrl{1}	& \rygate{1}{2}{}{-}	& \qw\rstick[8]{\rotatebox{90}{\Large \dicke{8}{2}}}			\\
			\lstick{\ket{0}}	& \qw			& \qw		& \qw			& \qw		& \qw			& \qw		& \rygate{1}{5}{/2}{}	& \qw		& \targ{}	& \rygate{1}{5}{/2}{-}	& \ctrl{1}	& \targ{}	& \ctrl{-1}	& \targ{}	& \rygate{1}{2}{/2}{-}	& \ctrl{1}		& \qw			& \targ{}	& \rygate{1}{2}{/2}{}	& \targ{}	& \qw			& \qw		\\
			\lstick{\ket{1}}	& \qw			& \qw		& \qw			& \targ{}	& \targ{}		& \qw		& \qw			& \qw		& \qw		& \qw			& \targ{}	& \qw		& \qw		& \ctrl{-1}	& \targ{}		& \targ{}		& \qw			& \targ{}	& \rygate{1}{2}{/2}{}	& \targ{}	& \qw			& \qw		\\
			\lstick{\ket{1}}	& \qw			& \qw		& \targ{}		& \qw		& \ctrl{-1}		& \ctrl{-3}	& \qw			& \targ{}	& \qw		& \qw			& \qw		& \qw		& \qw		& \qw		& \ctrl{-1}		& \qw			& \rygate{1}{2}{}{}	& \ctrl{-1}	& \rygate{1}{2}{/2}{}	& \ctrl{-1}	& \rygate{1}{2}{}{-}	& \qw		\\
			\lstick{\ket{0}}	& \rygate{6}{28}{}{}	& \ctrl{1}	& \ctrl{-1}		& \qw		& \targ{}		& \qw		& \qw			& \qw		& \qw		& \qw			& \targ{}	& \qw		& \qw		& \ctrl{3}	& \targ{}		& \targ{}		& \qw			& \targ{}	& \rygate{1}{2}{/2}{}	& \targ{}	& \qw			& \qw		\\
			\lstick{\ket{0}}	& \rygate{6}{22}{/2}{}	& \targ{}	& \rygate{6}{22}{/2}{-}	& \ctrl{-3}	& \ctrl{-1}		& \ctrl{1}	& \qw			& \targ{}	& \qw		& \qw			& \qw		& \qw		& \qw		& \qw		& \ctrl{-1}	      	& \qw			& \rygate{1}{2}{}{}	& \ctrl{-1}	& \rygate{1}{2}{/2}{}	& \ctrl{-1}	& \rygate{1}{2}{}{-}	& \qw		\\
			\lstick{\ket{0}}	& \qw			& \qw		& \qw			& \qw		& \rygate{5}{6}{/2}{}	& \targ{}	& \rygate{5}{6}{/2}{-}	& \ctrl{-1}	& \ctrl{1}	& \rygate{1}{2}{/2}{}	& \qw		& \ctrl{1}	& \targ{}	& \qw		& \qw			& \qw			& \rygate{1}{2}{}{}	& \ctrl{1}	& \rygate{1}{2}{/2}{}	& \ctrl{1}	& \rygate{1}{2}{}{-}	& \qw		\\	
			\lstick{\ket{0}}	& \qw			& \qw		& \qw			& \qw		& \qw			& \qw		& \rygate{1}{5}{/2}{}	& \qw		& \targ{}	& \rygate{1}{5}{/2}{-}	& \ctrl{-3}	& \targ{}	& \ctrl{-1}	& \targ{}	& \rygate{1}{2}{/2}{-}	& \ctrl{-3}		& \qw			& \targ{}	& \rygate{1}{2}{/2}{}	& \targ{}	& \qw			& \qw							
		\end{quantikz}
	\end{adjustbox}
	\caption{Preparation of a Dicke state \dicke{8}{2} on grid connectivity: The leftmost weight distribution block $\wdb{8}{4}{2}$ accepts only input Hamming weight $2$.
		It is followed by two recursive weight distribution blocks $\wdb{4}{2}{2}$, which are combined with a reversal of two wires,  
		and finally by four Dicke state unitaries $\dsu{2}{2}$ (two of them oriented opposite to the others). 
		The total CNOT count is 31. This improves on an extrapolated lower bound of $35$ CNOT gates for grids based on the scheme in~\cite{aktar2022divideconquer}. 
	}
	\label{fig:D82-grid}
\end{figure*}

\section{Conclusion}
\label{sec:conclusion}

We have presented Dicke state preparation circuits of depth $\bigO(k \log \tfrac{n}{k})$ and $\bigO(k \surd\tfrac{n}{k})$ for all-to-all and grid topologies, respectively, which
significantly improve and expand upon previous state-of-the art circuit depths.
It is at least plausible that for some parameter combinations, such as for constant $k$ on all-to-all connectivity or for arbitrary $k$ on $\surd{\tfrac{n}{k}} \times \sqrt{nk}$ grid topologies, there is no further improvement beyond constant factors relying only on quantum gates (i.e. without state or gate teleportation). 

On the other hand, our asymptotic improvements are not the end of the story:
We have observed that our techniques can also yield significantly lower gate counts for specific Dicke states when compared to existing work.
For example, in Figures~\ref{fig:D82-full}~\&~\ref{fig:D82-grid} we present circuits for $\dicke{8}{2}$ preparation on all-to-all and grid connectivity, with a total CNOT gate count of 27 and 31, respectively. 
To the best of our knowledge, both of these values easily lower the record in the number of $2$-qubit gates necessary to prepare $\dicke{8}{2}$.
The circuits can also be opened in Quirk from the following links for
\href{https://algassert.com/quirk#circuit=%7B%22cols%22%3A%5B%5B1%2C1%2C1%2C1%2C%22~1u3%22%2C%22~oppl%22%5D%2C%5B1%2C1%2C1%2C%22X%22%2C%22%E2%80%A2%22%2C%22X%22%5D%2C%5B1%2C1%2C1%2C1%2C1%2C%22~6d8n%22%5D%2C%5B1%2C1%2C%22X%22%2C1%2C1%2C%22%E2%80%A2%22%5D%2C%5B1%2C1%2C%22Chance4%22%5D%2C%5B1%2C1%2C%22X%22%2C%22%E2%80%A2%22%5D%2C%5B%22~f99s%22%5D%2C%5B%22X%22%2C1%2C1%2C%22%E2%80%A2%22%5D%2C%5B%22~eag4%22%2C%22~h68j%22%5D%2C%5B%22%E2%80%A2%22%2C%22X%22%2C1%2C%22X%22%5D%2C%5B%22~l849%22%2C%22~k3b5%22%5D%2C%5B%22X%22%2C1%2C%22%E2%80%A2%22%5D%2C%5B%22~kb4m%22%5D%2C%5B1%2C%22%E2%80%A2%22%2C%22X%22%5D%2C%5B%22%E2%80%A2%22%2C1%2C%22X%22%5D%2C%5B1%2C1%2C%22X%22%2C%22%E2%80%A2%22%5D%2C%5B%22Chance4%22%5D%2C%5B1%2C1%2C1%2C1%2C%22X%22%2C%22%E2%80%A2%22%5D%2C%5B1%2C1%2C1%2C1%2C1%2C1%2C%22~f99s%22%5D%2C%5B1%2C1%2C1%2C1%2C1%2C%22%E2%80%A2%22%2C%22X%22%5D%2C%5B1%2C1%2C1%2C1%2C1%2C1%2C%22~eag4%22%2C%22~h68j%22%5D%2C%5B1%2C1%2C1%2C1%2C1%2C%22X%22%2C%22%E2%80%A2%22%2C%22X%22%5D%2C%5B1%2C1%2C1%2C1%2C1%2C1%2C%22~l849%22%2C%22~k3b5%22%5D%2C%5B1%2C1%2C1%2C1%2C%22%E2%80%A2%22%2C1%2C%22X%22%5D%2C%5B1%2C1%2C1%2C1%2C1%2C1%2C%22~kb4m%22%5D%2C%5B1%2C1%2C1%2C1%2C%22X%22%2C%22%E2%80%A2%22%5D%2C%5B1%2C1%2C1%2C1%2C%22X%22%2C1%2C%22%E2%80%A2%22%5D%2C%5B1%2C1%2C1%2C1%2C%22X%22%2C1%2C1%2C%22%E2%80%A2%22%5D%2C%5B%22Chance4%22%2C1%2C1%2C1%2C%22Chance4%22%5D%2C%5B%22~5npc%22%2C1%2C%22~5npc%22%2C1%2C%22~5npc%22%2C1%2C%22~5npc%22%5D%5D%2C%22gates%22%3A%5B%7B%22id%22%3A%22~suqi%22%2C%22name%22%3A%22%CE%B8%2F2%5Cn%E2%88%9A1%2F2%22%2C%22circuit%22%3A%7B%22cols%22%3A%5B%5B%7B%22id%22%3A%22Ryft%22%2C%22arg%22%3A%22acos(sqrt(1%2F2))%22%7D%5D%5D%7D%7D%2C%7B%22id%22%3A%22~5npc%22%2C%22name%22%3A%22U2%2C2%22%2C%22circuit%22%3A%7B%22cols%22%3A%5B%5B1%2C%7B%22id%22%3A%22Ryft%22%2C%22arg%22%3A%22pi%2F2%22%7D%5D%2C%5B%22X%22%2C%22%E2%80%A2%22%5D%2C%5B%22~suqi%22%2C%22~suqi%22%5D%2C%5B%22X%22%2C%22%E2%80%A2%22%5D%2C%5B1%2C%7B%22id%22%3A%22Ryft%22%2C%22arg%22%3A%22-pi%2F2%22%7D%5D%5D%7D%7D%2C%7B%22id%22%3A%22~1u3%22%2C%22name%22%3A%22%CE%B8%5Cn%E2%88%9A6%2F28%22%2C%22circuit%22%3A%7B%22cols%22%3A%5B%5B%7B%22id%22%3A%22Ryft%22%2C%22arg%22%3A%222acos(sqrt(6%2F28))%22%7D%5D%5D%7D%7D%2C%7B%22id%22%3A%22~oppl%22%2C%22name%22%3A%22%CE%B8%2F2%5Cn%E2%88%9A6%2F22%22%2C%22circuit%22%3A%7B%22cols%22%3A%5B%5B%7B%22id%22%3A%22Ryft%22%2C%22arg%22%3A%22acos(sqrt(6%2F22))%22%7D%5D%5D%7D%7D%2C%7B%22id%22%3A%22~6d8n%22%2C%22name%22%3A%22-%CE%B8%2F2%5Cn%E2%88%9A6%2F22%22%2C%22circuit%22%3A%7B%22cols%22%3A%5B%5B%7B%22id%22%3A%22Ryft%22%2C%22arg%22%3A%22-acos(sqrt(6%2F22))%22%7D%5D%5D%7D%7D%2C%7B%22id%22%3A%22~f99s%22%2C%22name%22%3A%22%CE%B8%2F2%5Cn%E2%88%9A5%2F6%22%2C%22circuit%22%3A%7B%22cols%22%3A%5B%5B%7B%22id%22%3A%22Ryft%22%2C%22arg%22%3A%22acos(sqrt(5%2F6))%22%7D%5D%5D%7D%7D%2C%7B%22id%22%3A%22~eag4%22%2C%22name%22%3A%22-%CE%B8%2F2%5Cn%E2%88%9A5%2F6%22%2C%22circuit%22%3A%7B%22cols%22%3A%5B%5B%7B%22id%22%3A%22Ryft%22%2C%22arg%22%3A%22-acos(sqrt(5%2F6))%22%7D%5D%5D%7D%7D%2C%7B%22id%22%3A%22~h68j%22%2C%22name%22%3A%22%CE%B8%2F2%5Cn%E2%88%9A1%2F5%22%2C%22circuit%22%3A%7B%22cols%22%3A%5B%5B%7B%22id%22%3A%22Ryft%22%2C%22arg%22%3A%22acos(sqrt(1%2F5))%22%7D%5D%5D%7D%7D%2C%7B%22id%22%3A%22~k3b5%22%2C%22name%22%3A%22-%CE%B8%2F2%5Cn%E2%88%9A1%2F5%22%2C%22circuit%22%3A%7B%22cols%22%3A%5B%5B%7B%22id%22%3A%22Ryft%22%2C%22arg%22%3A%22-acos(sqrt(1%2F5))%22%7D%5D%5D%7D%7D%2C%7B%22id%22%3A%22~l849%22%2C%22name%22%3A%22%CE%B8%2F2%5Cn%E2%88%9A1%2F2%22%2C%22circuit%22%3A%7B%22cols%22%3A%5B%5B%7B%22id%22%3A%22Ryft%22%2C%22arg%22%3A%22acos(sqrt(1%2F2))%22%7D%5D%5D%7D%7D%2C%7B%22id%22%3A%22~kb4m%22%2C%22name%22%3A%22-%CE%B8%2F2%5Cn%E2%88%9A1%2F2%22%2C%22circuit%22%3A%7B%22cols%22%3A%5B%5B%7B%22id%22%3A%22Ryft%22%2C%22arg%22%3A%22-acos(sqrt(1%2F2))%22%7D%5D%5D%7D%7D%5D%2C%22init%22%3A%5B0%2C0%2C1%2C1%5D%7D
}{all-to-all connectivity} %
and for  
\href{https://algassert.com/quirk#circuit=%7B%22cols%22%3A%5B%5B1%2C1%2C1%2C1%2C%22~1u3%22%2C%22~oppl%22%5D%2C%5B1%2C1%2C1%2C%22X%22%2C%22%E2%80%A2%22%2C%22X%22%5D%2C%5B1%2C1%2C1%2C1%2C1%2C%22~6d8n%22%5D%2C%5B1%2C1%2C%22X%22%2C1%2C1%2C%22%E2%80%A2%22%5D%2C%5B1%2C1%2C%22Chance4%22%5D%2C%5B1%2C1%2C%22X%22%2C%22%E2%80%A2%22%5D%2C%5B%22~f99s%22%5D%2C%5B%22X%22%2C1%2C1%2C%22%E2%80%A2%22%5D%2C%5B%22~eag4%22%2C%22~h68j%22%5D%2C%5B%22%E2%80%A2%22%2C%22X%22%2C1%2C%22X%22%5D%2C%5B%22~l849%22%2C%22~k3b5%22%5D%2C%5B1%2C%22%E2%80%A2%22%2C%22X%22%5D%2C%5B%22%E2%80%A2%22%2C%22X%22%5D%2C%5B%22X%22%2C%22%E2%80%A2%22%5D%2C%5B1%2C%22X%22%2C%22%E2%80%A2%22%5D%2C%5B1%2C%22~kb4m%22%5D%2C%5B1%2C1%2C%22X%22%2C%22%E2%80%A2%22%5D%2C%5B1%2C%22%E2%80%A2%22%2C%22X%22%5D%2C%5B%22Chance4%22%5D%2C%5B1%2C1%2C1%2C1%2C%22X%22%2C%22%E2%80%A2%22%5D%2C%5B1%2C1%2C1%2C1%2C1%2C1%2C%22~f99s%22%5D%2C%5B1%2C1%2C1%2C1%2C1%2C%22%E2%80%A2%22%2C%22X%22%5D%2C%5B1%2C1%2C1%2C1%2C1%2C1%2C%22~eag4%22%2C%22~h68j%22%5D%2C%5B1%2C1%2C1%2C1%2C1%2C%22X%22%2C%22%E2%80%A2%22%2C%22X%22%5D%2C%5B1%2C1%2C1%2C1%2C1%2C1%2C%22~l849%22%2C%22~k3b5%22%5D%2C%5B1%2C1%2C1%2C1%2C%22X%22%2C1%2C1%2C%22%E2%80%A2%22%5D%2C%5B1%2C1%2C1%2C1%2C1%2C1%2C%22%E2%80%A2%22%2C%22X%22%5D%2C%5B1%2C1%2C1%2C1%2C1%2C1%2C%22X%22%2C%22%E2%80%A2%22%5D%2C%5B1%2C1%2C1%2C1%2C%22%E2%80%A2%22%2C1%2C1%2C%22X%22%5D%2C%5B1%2C1%2C1%2C1%2C1%2C1%2C1%2C%22~kb4m%22%5D%2C%5B1%2C1%2C1%2C1%2C%22X%22%2C%22%E2%80%A2%22%5D%2C%5B1%2C1%2C1%2C1%2C%22X%22%2C1%2C1%2C%22%E2%80%A2%22%5D%2C%5B%22Chance4%22%2C1%2C1%2C1%2C%22Chance4%22%5D%2C%5B%22Swap%22%2C%22Swap%22%5D%2C%5B1%2C1%2C1%2C1%2C1%2C1%2C%22Swap%22%2C%22Swap%22%5D%2C%5B%22~5npc%22%2C1%2C%22~5npc%22%2C1%2C%22~5npc%22%2C1%2C%22~5npc%22%5D%5D%2C%22gates%22%3A%5B%7B%22id%22%3A%22~suqi%22%2C%22name%22%3A%22%CE%B8%2F2%5Cn%E2%88%9A1%2F2%22%2C%22circuit%22%3A%7B%22cols%22%3A%5B%5B%7B%22id%22%3A%22Ryft%22%2C%22arg%22%3A%22acos(sqrt(1%2F2))%22%7D%5D%5D%7D%7D%2C%7B%22id%22%3A%22~5npc%22%2C%22name%22%3A%22U2%2C2%22%2C%22circuit%22%3A%7B%22cols%22%3A%5B%5B1%2C%7B%22id%22%3A%22Ryft%22%2C%22arg%22%3A%22pi%2F2%22%7D%5D%2C%5B%22X%22%2C%22%E2%80%A2%22%5D%2C%5B%22~suqi%22%2C%22~suqi%22%5D%2C%5B%22X%22%2C%22%E2%80%A2%22%5D%2C%5B1%2C%7B%22id%22%3A%22Ryft%22%2C%22arg%22%3A%22-pi%2F2%22%7D%5D%5D%7D%7D%2C%7B%22id%22%3A%22~1u3%22%2C%22name%22%3A%22%CE%B8%5Cn%E2%88%9A6%2F28%22%2C%22circuit%22%3A%7B%22cols%22%3A%5B%5B%7B%22id%22%3A%22Ryft%22%2C%22arg%22%3A%222acos(sqrt(6%2F28))%22%7D%5D%5D%7D%7D%2C%7B%22id%22%3A%22~oppl%22%2C%22name%22%3A%22%CE%B8%2F2%5Cn%E2%88%9A6%2F22%22%2C%22circuit%22%3A%7B%22cols%22%3A%5B%5B%7B%22id%22%3A%22Ryft%22%2C%22arg%22%3A%22acos(sqrt(6%2F22))%22%7D%5D%5D%7D%7D%2C%7B%22id%22%3A%22~6d8n%22%2C%22name%22%3A%22-%CE%B8%2F2%5Cn%E2%88%9A6%2F22%22%2C%22circuit%22%3A%7B%22cols%22%3A%5B%5B%7B%22id%22%3A%22Ryft%22%2C%22arg%22%3A%22-acos(sqrt(6%2F22))%22%7D%5D%5D%7D%7D%2C%7B%22id%22%3A%22~f99s%22%2C%22name%22%3A%22%CE%B8%2F2%5Cn%E2%88%9A5%2F6%22%2C%22circuit%22%3A%7B%22cols%22%3A%5B%5B%7B%22id%22%3A%22Ryft%22%2C%22arg%22%3A%22acos(sqrt(5%2F6))%22%7D%5D%5D%7D%7D%2C%7B%22id%22%3A%22~eag4%22%2C%22name%22%3A%22-%CE%B8%2F2%5Cn%E2%88%9A5%2F6%22%2C%22circuit%22%3A%7B%22cols%22%3A%5B%5B%7B%22id%22%3A%22Ryft%22%2C%22arg%22%3A%22-acos(sqrt(5%2F6))%22%7D%5D%5D%7D%7D%2C%7B%22id%22%3A%22~h68j%22%2C%22name%22%3A%22%CE%B8%2F2%5Cn%E2%88%9A1%2F5%22%2C%22circuit%22%3A%7B%22cols%22%3A%5B%5B%7B%22id%22%3A%22Ryft%22%2C%22arg%22%3A%22acos(sqrt(1%2F5))%22%7D%5D%5D%7D%7D%2C%7B%22id%22%3A%22~k3b5%22%2C%22name%22%3A%22-%CE%B8%2F2%5Cn%E2%88%9A1%2F5%22%2C%22circuit%22%3A%7B%22cols%22%3A%5B%5B%7B%22id%22%3A%22Ryft%22%2C%22arg%22%3A%22-acos(sqrt(1%2F5))%22%7D%5D%5D%7D%7D%2C%7B%22id%22%3A%22~l849%22%2C%22name%22%3A%22%CE%B8%2F2%5Cn%E2%88%9A1%2F2%22%2C%22circuit%22%3A%7B%22cols%22%3A%5B%5B%7B%22id%22%3A%22Ryft%22%2C%22arg%22%3A%22acos(sqrt(1%2F2))%22%7D%5D%5D%7D%7D%2C%7B%22id%22%3A%22~kb4m%22%2C%22name%22%3A%22-%CE%B8%2F2%5Cn%E2%88%9A1%2F2%22%2C%22circuit%22%3A%7B%22cols%22%3A%5B%5B%7B%22id%22%3A%22Ryft%22%2C%22arg%22%3A%22-acos(sqrt(1%2F2))%22%7D%5D%5D%7D%7D%5D%2C%22init%22%3A%5B0%2C0%2C1%2C1%5D%7D
}{grid connectivity}.
However, these circuits rely on hand-tuned non-scalable implementations of our weight distribution blocks. 
This leaves room for further improvements in the sense of scalable yet resource-efficient implementations of the the latter.

We repeat the observations from Sections~\ref{sec:intro}~and~\ref{sec:preliminaries} that 
(i) our results for Dicke states $\dicke{n}{k}$ also hold for Dicke states $\dicke{n}{n-k} = X^{\otimes n}\dicke{n}{k}$, 
(ii) due to the variable input structure of Dicke state unitaries $\dsu{n}{k}$
our results extend to Symmetric states consisting only of computational basis states with Hamming weight $\ell \leq k$. 
Combining the two, we also get circuits preparing Symmetric states consisting only of computational basis states with Hamming weight $\ell \geq n-k$.

Finally, using our circuits in reverse with some circuit additions responsible for implementing encoding changes~\cite{baertschi2019deterministic}, 
we get $\bigO(k \log \tfrac{n}{k})$-depth circuits for quantum compression 
of the discussed types of Hamming-weight restricted Symmetric states from $n$ into $\lceil \log(k+1) \rceil$ many qubits.

%% LEAVE PLAINURL FOR REVIEW PHASE
\nocite{quantikz}
\bibliographystyle{plainurl}%IEEEtran} % prefer plainurl over IEEEtran, as it shows DOIs and arXiv links. 
\bibliography{ds-bib}

\end{document}